\def\d{\mathrm{d}}
\newcommand{\VaR}{\mathrm{VaR}}
\newcommand{\RVaR}{\mathrm{RVaR}}
\newcommand{\ES}{\mathrm{ES}}
\newcommand{\E}{\mathbb{E}}
\newcommand{\R}{\mathbb{R}}
\newcommand{\p}{\mathbb{P}}
\newcommand{\id}{\mathds{1}}
\renewcommand{\(}{\left(}
\renewcommand{\)}{\right)}
\renewcommand{\ge}{\geqslant}
\renewcommand{\le}{\leqslant}
\renewcommand{\geq}{\geqslant}
\renewcommand{\leq}{\leqslant}
\renewcommand{\epsilon}{\varepsilon}
\newcommand{\esssup}{\mathrm{ess\mbox{-}sup}}
\newcommand{\essinf}{\mathrm{ess\mbox{-}inf}}
\theoremstyle{plain}
\newtheorem{theorem}{Theorem}
\newtheorem{corollary}{Corollary}
\newtheorem{lemma}{Lemma}
\newtheorem{proposition}{Proposition}
\theoremstyle{definition}
\newtheorem{example}{Example}
\theoremstyle{remark}
\newtheorem{remark}{Remark}
\newcommand{\cet}{\begin{center}}
\newcommand{\ecet}{\end{center}}
\begin{document}

\title{Risk Aggregation under Dependence Uncertainty and an Order Constraint}

\author{Yuyu Chen\thanks{Department of Statistics and Actuarial Science,
  University of Waterloo, Canada.
  \texttt{y937chen@uwaterloo.ca}.}
  \and Liyuan Lin\thanks{Department of Statistics and Actuarial Science,
  University of Waterloo, Canada.
  \texttt{l89lin@uwaterloo.ca}.}
  \and Ruodu Wang\thanks{Department of Statistics and Actuarial Science, University of Waterloo, Canada.   \texttt{wang@uwaterloo.ca.}}}

\maketitle

%\section{Two dimensional case}
%
%Let $F$ and $G$ be two distributions satisfying $F\le_{\rm st} G$.
%Denote by
%$$
%\S^o _2=\{X_1+X_2: X_1\sim F,~X_2\sim G,~ X_1\le X_2\},
%$$
%where ``$o$" means ``ordered".
%I am interested in the following quantity, for some $p\in (0,1)$,
%$$
%\overline{\VaR}_p(\S^o_2)=\sup\{\VaR_p(S):S\in \S^o_2\}.
%$$
%An equivalent problem is the supremum of $\p(S\le x)$ over $S\in \S^o_2$ for some $x$.
%This should directly follow from \cite{NW20} but maybe some details need to be verified. The steps should be
%\begin{enumerate}
%\item First consider the $p$-tail distributions of $F$ and $G$, denoted by $F_p$ and $G_p$, respectively.
%\item Take $X_p\sim F_p$ and $Y_p\sim G_p$, and they follow the dependent structure in \cite{NW20}.
%\item Take $m=\essinf (X_p+Y_p)$.
%\end{enumerate}
%I guess $m = \overline{\VaR}_p(\S_2)$.

%Another question is about the difference:
%$$
%\overline{\VaR}_p(\D^o_2)=\sup\{\VaR_p(X_2-X_1):  X_1\sim F,~X_2\sim G,~ X_1\le X_2\}.
%$$
%The difference represents an improvement in treatment effect (e.g., \cite{M97}), which may have more applications.
%
%All possible joint distributions of $X_1, X_2$ satisfying the order condition are characterized by \cite{NW20}, and hence there should be no problem solving these two questions.
\begin{abstract}
We study the aggregation of two risks when the marginal distributions are known and the dependence structure is unknown, under the additional constraint that one risk is smaller than or equal to the other.
Risk aggregation problems with the order constraint are closely related to the recently introduced notion of the directional lower (DL) coupling.
The largest aggregate risk in concave order (thus, the smallest aggregate risk in convex order) is attained by the DL coupling. These results are further generalized to calculate the best-case and worst-case values of tail risk measures. In particular, we obtain analytical  formulas for bounds on Value-at-Risk. Our numerical results suggest that the new bounds on risk measures with the extra order constraint can greatly improve those with full dependence uncertainty.
\end{abstract}

\textbf{Keywords:} risk aggregation; risk measures; Value-at-Risk; concave order; directional lower coupling.

\section{Introduction}

Quantifying the risk of a portfolio has gained much interest in the literature of finance and actuarial science. To accurately estimate the risk level,  the joint distribution of the risks needs to be specified. However, it is challenging to estimate or test the dependence structure of a portfolio. Given known marginal distributions but unspecified dependence structure of risks, one of the most relevant problems is to find the worst-case (the largest possible) and the best-case (the smallest possible) values of a risk measure over all the possible dependence structures; see \cite{EP06}, \cite{BJW14} and \cite{EPR13, EWW15} for  general discussions.

While bounds for risk measures calculated based on the sole knowledge of marginal distributions are generally wide,  many attempts have been made to narrow them by incorporating partial dependence information into the problem. For instance, a variance constraint is imposed at the portfolio level by \cite{bernard2017value}. A lower bound is placed on the corresponding copula of risks by \cite{puccetti2016var}.  \cite{puccetti2017reduction} assumed that certain groups of risks are independent while the dependence structure is unknown within each group. \cite{bernard2017risk} considered a   partially specified factor model with dependence uncertainty.

 In the literature of isotonic regression, order constraint on the expectations of target variables has been widely used in many practical applications; see Section 1 of \cite{henzi2019isotonic} for an overview.  
 For two  random variables $\xi_{1}$ and $\xi_{2}$, an isotonic regression problem has the constraint $\E[\xi_{1}]\le \E[\xi_{2}]$.
  In many situations, while the risks $\xi_{1}$ and $\xi_{2}$ can be affected by a common shock $Z$ (e.g., market risk, pandemic, natural disaster), one can impose a stricter but natural assumption, that is, $\E[\xi_{1}|Z]\le \E[\xi_{2}|Z]$. In this paper, we study the aggregation $S=X+Y$ given known marginal distributions with the order constraint $X\le Y$, which might arise from, for instance, the above setting where  $X=\E[\xi_{1}|Z]$ and $Y=\E[\xi_{2}|Z]$.
  In practice, insurance companies can divide the loss of a portfolio into different categories according to the riskiness of the contract, and the order constraint naturally holds in situations where one risk triggers another. For instance, when floods occur, the higher floors of apartments/houses will suffer losses only if there is a huge damage in lower floors.  
As  another example, some cost categories for an insurance company, such as rehabilitation costs,  can only occur as a consequence of some severe disease.

  Before imposing the order constraint, one should verify that one of the  two distributions is stochastically smaller than the other. For real data, this relation can be tested via, e.g., the methods of \cite{barrett2003consistent}. Statistical inference for distributions ordered stochastically can be carried out  through the  isotonic distributional regression of \cite{henzi2019isotonic}.

  Fix an atomless probability space $(\Omega,\mathcal A,\p)$ and let $\mathcal M$ be the set of cdfs on $\R$. For  $F, G\in \mathcal{M}$  such that $F$ is stochastically smaller than $G$, define the set
$$
\mathcal F^o_2 (F,G) =\{(X,Y): X\sim F,~Y\sim G,~X\le Y\}.
$$
Here and throughout, the inequality $X\le Y$ is understood in the almost sure sense.
For a risk measure $\rho$, we are interested in
 the  worst-case and best-case values of $\rho$ over the set $\mathcal F^o_2 (F,G)$ denoted by
 \begin{equation}\label{pro:main}
\overline{\rho}(\mathcal F^o_{2}(F,G)):=\sup\{\rho(X+Y): (X,Y)\in\mathcal F^o_2 (F,G)\},
 \end{equation}
 and
  $$\underline{\rho}(\mathcal F^o_{2}(F,G)):=\inf\{\rho(X+Y): (X,Y)\in\mathcal F^o_2 (F,G)\}.$$

We mainly deal with the case where $\rho$ is a \emph{tail risk measure} introduced in \cite{LW20}.  The class of tail risk measures includes some of the most prevalent risk measures such as Value-at-Risk (VaR), Expected Shortfall (ES), and Range Value-at-Risk (RVaR). Generally speaking,  the value of a tail risk measure is determined by the risk's upper tail behavior. A key feature for a tail risk measure $\rho$ is that there exists another risk measure $\rho^{*}$, called the generator, such that $\rho(X) = \rho^*(X^{*})$ where the random variable $X^{*}$ follows the upper tail distribution of the random variable $X$.

%\begin{sloppypar}
In an unconstrained problem (i.e., only the marginal distributions of the two risks are known), for a tail risk measure $\rho$ such that $\rho^{*}$ is consistent with concave order,  the worst-case value of $\rho$ is attained by letting the upper tail risks be countermonotonic (i.e., the lower Fr{\'e}chet-Hoeffding bound). In particular, if $\rho$ is VaR, early results date back to  \cite{M81} and \cite{R82}.  Aggregation of more than two risks  is much more challenging; see \cite{WPY13}, \cite{PR13}, \cite{JHW16} and \cite{BLLW20} for some analytical results. The Rearrangement Algorithm (RA) is developed by  \cite{PR12} and \cite{EPR13} for numerical computation.
%\end{sloppypar}

The problem with the order constraint is more sophisticated.
Recently, \cite{AMZ20} obtained the pointwise lower bound $D_*^{F,G}$ on joint distribution of $(X,Y)$ in $\mathcal F^o_{2}(F, G)$.
In mass transportation theory,
\cite{NW20} proposed the \emph{directional optimal transport} between two random variables, and
the corresponding joint distribution  is also $D_*^{F,G}$.
The transport is called {directional} because of the constraint $Y\ge X$;
that is,  $X$ can only be transported upwards to $Y$.
Since $D_*^{F,G}$ is the smallest distribution function among all joint distributions of $(X,Y)\in \mathcal F^o_{2}(F, G)$,
we will call the distribution $D_*^{F,G}$ the \emph{directional lower (DL) coupling},
and $(X^*,Y^*)\sim D_*^{F,G}$ is said to be \emph{DL-coupled}.
From the minimality of $D_*^{F,G}$ and results on concordance order of \cite{muller2000some},  $X^*+Y^*$ is the largest
in concave order among   $X+Y$ where $(X,Y)\in \mathcal F^o_{2}(F, G)$.

In general, we are interested in risk measures such as $\VaR$ and $\RVaR$, which are not monotone in concave or convex order. Therefore, the DL coupling does not give the maximum or minimum values of these risk measures, and  considerable new techniques need to be developed to find bounds on these risk measures.
Although $\VaR$ is not monotone in convex or concave order,
its generator, the essential infimum, is monotone in concave order.
As the main contribution of the paper (Theorem \ref{thm2}),
we show that
for a tail risk measure $\rho$ with a generator $\rho^{*}$ that is monotone in concave order (such as VaR and RVaR),  the solution to the constrained problem \eqref{pro:main} can be obtained by using the upper tail distributions of risks.
Moreover, the worst-case value of $\rho$ with the order constraint is attained by letting the two upper tail risks be DL-coupled.
The above assertions  on tail risk measures are based on a novel technical result of monotone embedding (Theorem \ref{lem:embed}).

  Despite its natural form, the order constraint in this paper can be quite strong and may not be easy to verify in some applications. Moreover, significant reduction of uncertainty bounds occurs when the two risks have comparable sizes, making the order constraint harder to justify; see Section \ref{sec6}. 
  As such, our contributions should be seen as mainly theoretical,  and they will serve as fundamental tools for applications emerging in the future.

The rest of the paper is organized as follows.
In Section \ref{sec2}, we give a brief review on comonotonicity, countermonotonicity, and the {DL coupling}. In Section \ref{sec3}, we study the worst-case dependence structures of risk aggregation with the order constraint in concave order. In Section \ref{sec4}, the notion of strong stochastic order is introduced. With this notion, we obtain several useful theoretical results.  The main technical contributions are contained in Section \ref{sec5}, where we obtain worst-case and best-case values of tail risk measures with the order constraint. Analytical results for VaR and probability bounds are obtained. In Section \ref{sec6}, numerical studies are conducted to illustrate the impact of the order constraint on the bounds of risk measures. Some concluding remarks and an open question are discussed in Section \ref{sec:con}.

We conclude this section by providing additional notations and terminologies that will be used throughout this paper. A cdf $F$ is said to be smaller than a cdf $G$ in stochastic order if $F\geq G$, denoted by $F\le_{\rm st} G$. Throughout, whenever $\mathcal F^o_2 (F,G)$ appears,  $F$ and $G$ are two distributions satisfying $F\le_{\rm st} G$. A cdf $F$ (or a random variable $X\sim F$) is said to be smaller than a cdf $G$  (or a random variable $Y\sim G$)  in concave order if  $\E[u(X)]\le \E[u(Y)]$ for all concave functions $u:\R\to \R$ provided that the expectations exist, and we denote this by $X\le_{\rm cv} Y$ or $F\le_{\rm cv} G$.
Further, $F$  is smaller than $G$  in convex order if $G \le_{\rm cv  } F$, and this is denoted by $F \le_{\rm cx  } G$. The order $X\le_{\rm cx} Y$ for two random variables $X$ and $Y$ is defined similarly. For more properties of these stochastic orders, we refer to \cite{SS07}.
A law-invariant risk measure $\rho$ is a mapping from $\mathcal M$ to $\R$.
In addition, we write $\rho(X)=\rho(F)$ for a random variable $X$ with distribution $F$; thus, $\rho$ can also be interpreted as a mapping from the set of random variables to $\R$.
 An empty set is denoted by $\emptyset$. By convention,  $\inf \emptyset=\infty$ and $\sup \emptyset =-\infty$.

% which gives $\rho=\VaR_p^+$. Other examples include $\rho$ being an RVaR or an ES.

\section{The directional lower coupling}\label{sec2}
In this section, we collect some basic results on comonotonicity, countermonotonicity, and the DL coupling, which will be useful for our paper.

A random vector $(X,Y)$ is said to be \emph{comonotonic} if there exists a random variable $U$ and two increasing functions $f$ and $g$ such that $X=f(U)$ and $Y=g(U)$ almost surely.  A random vector $(X,Y)$ is \emph{countermonotonic} if $(X,-Y)$ is comonotonic. We refer to \cite{DDGKV02, DDGKTV06} for a review on comonotonicity and \cite{PW15} for negative dependence concepts including countermonotonicity.

 Let  the random vectors $(X^{ct},Y^{ct})$, $(X,Y)$ and $(X^{c},Y^{c})$ be such that they have the same marginal distributions, $(X^{ct},Y^{ct})$  is   countermonotonic, and $(X^{c},Y^{c})$ is comonotonic. It is well known that
 \begin{align}\label{eq:RW1} X^{c}+Y^{c}\le_{\rm cv  } X+Y\le_{\rm cv  }X^{ct}+Y^{ct};\end{align}
  see e.g., \citet[Corollary 3.28]{R13}.\footnote{We choose to work mainly with concave order instead of convex order because a major target of this paper is to study VaR bounds, and the generator of VaR is increasing in concave-order; see Sections \ref{sec4} and \ref{sec5}. Nevertheless, since $\le_{\rm cv}$ is the same as $\ge_{\rm cx}$, all statements on concave order in this paper can be equivalently stated using convex order. Convex order is common in the literature of risk management, e.g., \cite{DDGK05}.}
  %\footnote{By definition, all statements on concave order in this paper, such as   \eqref{eq:RW1}, can be equivalently stated using convex order, which is more common in the risk management literature,~e.g., \cite{DDGK05}. We choose to work with concave order because VaR bounds in Section \ref{sec5} are obtained via concave order.}
  For $F,G\in \mathcal{M}$, let $X^{c},X^{ct}\sim F$ and $Y^{c},Y^{ct}\sim G$. Note that if $F\leq_{\rm st}G$,  then $X^{c}\leq Y^{c}$, which can be easily checked by choosing $U$ as a uniform random variable over $(0,1)$, and choosing $f$ and $g$ as left quantiles of $F$ and $G$, respectively. Hence, $\mathcal F^o_2 (F,G)$ contains comonotonic random vectors. However, $(X^{ct},Y^{ct})$ may violate the order constraint unless the essential supremum of $F$ is less than or equal to the essential infimum of $G$. Therefore, $(X^{ct},Y^{ct})$   may not be in  $\mathcal F^o_2 (F,G)$.

To find an alternative for countermonotonicity in $\mathcal F^o_2 (F,G)$,
we need to introduce the DL coupling, whose distribution function is obtained by \cite{AMZ20}. Below, we explain the DL coupling in the context of mass transport  following  \cite{NW20}, which is motivated by treatment effect analysis and causal inference (e.g., \cite{M97}).
A directional coupling of $F$ and $G$ is the joint distribution of a random vector in $\mathcal F^o_2 (F, G)$ and the DL coupling is the special case of a direction coupling which corresponds to the directional optimal transport of \cite{NW20}. Let $X\sim F$ and $Y\sim G$. Denote by $\mu_F$ and $\mu_G$ the Borel probability measures generated by $F$ and $G$, respectively.
The directional optimal transport from $X$ to $Y$ can be constructed by considering the common part and the singular parts of $\mu_{F}$ and $\mu_{G}$ separately. We first assume that $F$ and $G$ are continuous distributions.  %The common part $\mu\wedge  \nu$ of $F$ and $G$ is defined as%frac{\mathrm{d}(\mu\wedge  \nu)}{\mathrm{d}(\mu+\nu)}:=\frac{\mathrm{d}\mu}{\mathrm{d}(\mu+\nu)}\wedge \frac{\mathrm{d}\nu}{\mathrm{d}(\mu+\nu)}.$$
%In other words,
 The common part $\mu_{F}\wedge  \mu_{G}$ is defined as the maximal measure $\theta$ such that $\theta\le \mu_{F}$ and $\theta\le \mu_{G}$. %The common part $\mu\wedge  \nu$ is well-defined since $\mu$ and $\nu$ are absolutely continuous with respect to $\mu+\nu$.%
 The singular parts of $\mu_{F}$ and $\mu_{G}$ are defined as $\mu'_{F}=\mu_{F}-\mu_{F}\wedge  \mu_{G}$ and $\mu'_{G}=\mu_{G}-\mu_{F}\wedge  \mu_{G}$. The shaded areas of density plots in Figure \ref{fig:density} illustrate the idea of the common   and singular parts for two Pareto distributions $F(x)=1-1/x$ for $x\ge 1$, and $G(y)=1-2/y$ for $y\geq 2$.
\begin{figure}[htbp]
\centering
\caption{Common   and singular parts of $\mu_{F}$ and $\mu_{G}$}\label{fig:density}
\includegraphics[height=4.5cm]{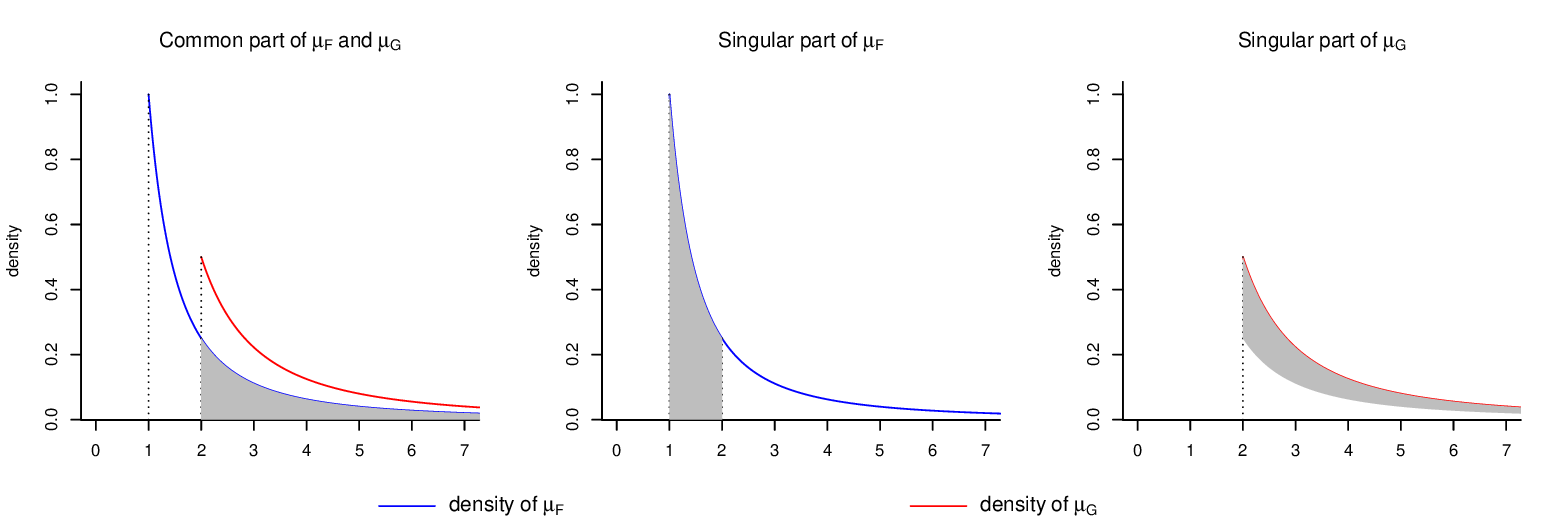}
\end{figure}

 The directional optimal transport  between $\mu_F$ and $\mu_G$ can be described in two pieces. First, the common part of $\mu_{F}$ and $\mu_{G}$ couples identically to each other. The transport from the singular part of $\mu_F$ to the singular part of  $\mu_G$, denoted by $T^{F, G}$, is defined as
$$T^{F,G}(x)=\inf\left\{z \ge x:F(z)-G(z)<F(x)-G(x)\right\}.$$
 Corollary 2.4 of \cite{NW20} gives the following representation of the DL coupling $D_*^{F,G}$, the joint distribution of $(X,Y)$ obtained above,
\begin{equation}\label{eq:rw1}
D_*^{F,G}(x,y)=\left\{
\begin{aligned}
&G(y)\quad&\text{if}~y \le x,\\
&F(x)-\inf_{z\in [x,y]}\{F(z)-G(z)\} \quad &\text{if}~y>x,
\end{aligned}
\right.
\end{equation}
which is also the bivariate distribution function in Theorem 6 of \cite{AMZ20}.
For a random vector $(X,Y)\sim D_*^{F,G}$, we say that $(X,Y)$ is \emph{DL-coupled}. Since DL coupling couples the common part of distributions to itself via the identity, it is a maximal coupling which maximizes $\p(X=Y)$ given the marginal distributions of $X$ and $Y$; see e.g., \citet[p.~104-112]{T00}. {DL coupling} differs from countermonotonicity in general, and they coincide if the essential supremum of $F$ is less than or equal to the essential infimum of $G$. In this special case,  all couplings between $F$ and $G$ are directional. In Figure \ref{copulas}, the support of the copula representing the DL coupling is plotted  for three pairs of Pareto distributions. Since $F$ and $G$ are identical in Figure \ref{c1}, the DL coupling is equivalent to comonotonicity. The DL coupling in Figure \ref{c2} is a simple combination of comonotonicity and countermonotonicity on the common part and singular parts of $\mu_{F}$ and $\mu_{G}$, respectively. The DL coupling in Figure \ref{c3} is more similar to countermonotonicity.
We warn the reader that, in general, the DL coupling can be much more complicated than these simple cases for other choices of marginal distributions. \cite{NW20} showed that the
DL coupling is the combination of one comonotonic coupling and countably many countermonotonic couplings, but these countermonotonic couplings may not be between conditional distributions on intervals like in these examples; see Proposition 2.6 and Example 6.3 of \cite{NW20}.
\begin{figure}[h]
     \caption{Support of the copula of $(U_{1},U_{2})=(F(X),G(Y))$ where $(X,Y)\sim D_*^{F,G}$}
        \label{copulas}
     \centering
     \begin{subfigure}[b]{0.3\textwidth}
         \centering
         \includegraphics[width=\textwidth]{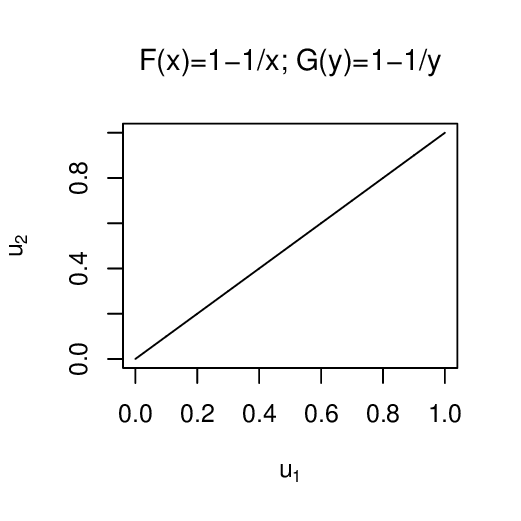}
         \caption{}
         \label{c1}
     \end{subfigure}
     \hfill
     \begin{subfigure}[b]{0.3\textwidth}
         \centering
         \includegraphics[width=\textwidth]{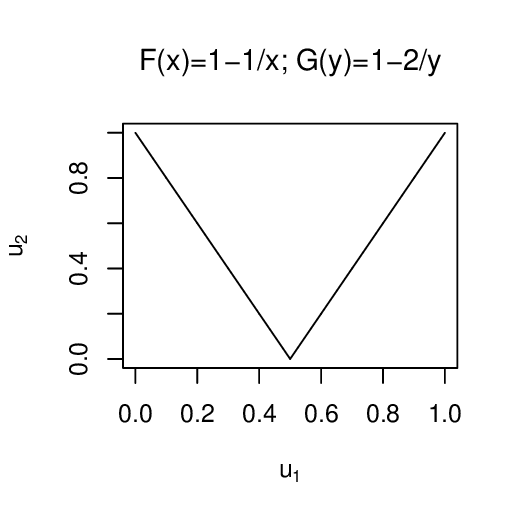}
         \caption{}
         \label{c2}
     \end{subfigure}
     \hfill
     \begin{subfigure}[b]{0.3\textwidth}
         \centering
         \includegraphics[width=\textwidth]{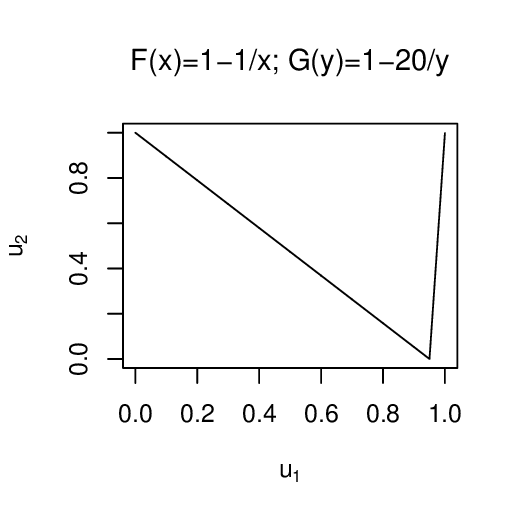}
         \caption{}
         \label{c3}
     \end{subfigure}
\end{figure}

One can also construct {DL coupling} for non-continuous distributions $F$ and $G$ satisfying $F\le_{\rm st}G$. The idea is first to convert the distributions $F$ and $G$ to continuous distributions $F_{c}$ and $G_{c}$ by a monotone transformation. Thereafter, construct DL coupling $D_*^{F_{c},G_{c}}$ and reverse the transformation back to non-continuous case; see Section 5.4 of \cite{NW20} for more details. As an important fact,  DL-coupled $(X,Y)$ always exists if $F\le_{\rm st}G$, and it has the distribution function \eqref{eq:rw1}.

Note that {DL coupling} may render the transport from $X$ to $Y$ randomized. That is, a realization of $G$ may have two pre-images through directional optimal transport.
In the language of mass transport theory, the directional optimal transport is Kantorovich-type but not necessarily Monge-type.
We use the following example to illustrate how {DL coupling} introduces such randomness and affects the aggregation of risks.
\begin{example}[Pareto risks: {DL coupling}]\label{ex:pareto}
Suppose that two risks follow Pareto distributions $F(x)=1-1/x$ for $x\ge 1$, and $G(y)=1-2/y$ for $y\geq 2$. Since $F\le_{\rm st}G$, we can take $(X,Y)\sim D_*^{F,G}$.
%We have $F(x)-G(x)=(1-1/x)\mathbb{I}_{ \{ x\in [1,2]\}}+1/x\mathbb{I}_{\{x\in(2,\infty)\}}$.
%\com{Ex1: not sure if it is suitable to describe common/singular parts by intervals}
%The common part covers $x \in (2,\infty)$, while the singular part is $x \in [1,2]$.
The directional optimal transport between the singular parts of $\mu_F$ and $\mu_G$ is
$$T^{F,G}(x)=\inf\left\{z \ge x:\frac{1}{z}<1-\frac{1}{x}\right\}=\frac{x}{x-1},~~~x\in(1,2].$$
If $x=1$, $T^{F,G}(x)=\inf\emptyset=\infty$. The directional optimal transport between the common part of $\mu_F$ and $\mu_G$ is the identical transport. Thus, for a real number $y \ge 2$, its pre-image is either $y$ through the identical transport or $y/(y-1)$ through $T^{F,G}$.  Let $c^{*}=(c-\sqrt{c^{2}-4c})/2$ for $c\in [4,\infty]$. By Corollary 2.9 of \cite{NW20}, we have
\begin{align*}
\p(X+Y\le c)&=(\mu_{F}\wedge \mu_{G})\left(\left[-\infty,\frac{c}{2}\right]\right)+\mu'_{F}\left(\left[c^{*},2\right]\right)\\
&=\left(F\left(\frac{c}{2}\right)-F\left(2\right)\right)+\left(F(2)-F\left(c^{*}\right)\right)\\
&=\frac{c+\sqrt{c^2-4c}-4}{2c}.
\end{align*}
This example will be continued in Examples \ref{ex:essinf}, \ref{ex:varbounds} and \ref{ex:prob}.
In particular, if the order constraint is imposed, the DL coupling  leads to the  largest essential infimum of $X+Y$, but it
does not lead to the largest or the smallest probability $\p(X+Y\le t)$ in general.
\end{example}

\section{Optimality of the directional lower coupling}\label{sec3}

In this section, we study the optimal dependence structures of $(X, Y)\in\mathcal F^o_2 (F, G)$ in the sense of concave order, or equivalently, convex order.
As we have seen in \eqref{eq:RW1},  comonotonicity of   $(X,Y)$ yields the smallest $X+Y$ in concave order among all possible dependence structures with given marginal distributions, and hence it also yields  the smallest concave order of $X+Y$ for $(X,Y)\in  \mathcal{F}^o_2 (F,G)$.
On the other hand, a simple result from \cite{NW20} shows that
 the DL coupling of   $(X,Y)$ yields the  largest concave order of $X+Y$ over $\mathcal{F}^o_2 (F,G)$. The concave ordering bounds are very useful for the calculation of   bounds on risk measures.
\begin{lemma}\label{lem:opt}
  For $(X,Y),(X^c,Y^c),(X',Y')\in \mathcal F^o_2(F,G)$
such that $(X^c,Y^c)$ is comonotonic and $(X',Y')$ is DL-coupled, we have
$$ X^c+Y^c \le_{\rm cv}  X+Y  \le_{\rm cv}  X'+Y' .$$
\end{lemma}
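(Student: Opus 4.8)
The plan is to prove the two inequalities by entirely separate arguments: the comonotonic lower bound follows from the general unconstrained result already recorded in the excerpt, while the DL upper bound requires the order constraint and is obtained by translating the pointwise minimality of $D_*^{F,G}$ into the language of concordance (equivalently supermodular) order.

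For the left inequality $X^c+Y^c\le_{\rm cv}X+Y$ I would do essentially no new work. This is a direct instance of the general bound \eqref{eq:RW1}, which holds for \emph{any} pair of random variables carrying the marginals $F$ and $G$ and makes no reference to the order constraint. As noted just after \eqref{eq:RW1}, comonotonicity together with $F\le_{\rm st}G$ forces $X^c\le Y^c$, so $(X^c,Y^c)\in\mathcal F^o_2(F,G)$ and the comonotonic lower bound in \eqref{eq:RW1} applies to every competitor $(X,Y)\in\mathcal F^o_2(F,G)$.

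The right inequality is where the DL coupling matters. First I would recall from the construction in Section \ref{sec2}, in particular the representation \eqref{eq:rw1} due to \cite{AMZ20} and \cite{NW20}, that $D_*^{F,G}$ is the \emph{pointwise smallest} joint cdf among all couplings in $\mathcal F^o_2(F,G)$: for any $(X,Y)\in\mathcal F^o_2(F,G)$ with joint cdf $H$ one has $D_*^{F,G}(x,y)\le H(x,y)$ for every $(x,y)$. Because all these couplings share the same marginals $F$ and $G$, this pointwise comparison of cdfs is precisely the statement that the DL-coupled vector $(X',Y')$ is smallest in concordance order, which in the bivariate case coincides with the supermodular order; this is the relevant content of \cite{muller2000some}.

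Finally I would invoke the standard fact that supermodular order of random vectors implies convex order of their component sums: since $(x,y)\mapsto\phi(x+y)$ is supermodular whenever $\phi$ is convex, $(X',Y')\preceq_{\rm sm}(X,Y)$ yields $\E[\phi(X'+Y')]\le\E[\phi(X+Y)]$ for all convex $\phi$, that is, $X'+Y'\le_{\rm cx}X+Y$. Reading $\le_{\rm cx}$ as $\ge_{\rm cv}$ gives $X+Y\le_{\rm cv}X'+Y'$, as claimed. The lemma is not deep once the minimality of $D_*^{F,G}$ is in hand; the only points demanding genuine care are keeping the order directions straight (smallest cdf $\Rightarrow$ smallest concordance/supermodular $\Rightarrow$ smallest convex order of the sum $\Rightarrow$ largest concave order) and correctly invoking the bivariate equivalence that turns a pointwise cdf inequality among distributions with common marginals into a true supermodular ordering, rather than a mere numerical comparison.
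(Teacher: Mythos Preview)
Your proposal is correct and follows essentially the same route as the paper: the left inequality is taken directly from the unconstrained concave-order bound \eqref{eq:RW1}, and the right inequality is obtained by passing from the pointwise minimality of $D_*^{F,G}$ to minimality in concordance/supermodular order (via \cite{muller2000some}) and then using that $(x,y)\mapsto \phi(x+y)$ is supermodular for convex $\phi$. The only cosmetic difference is that the paper phrases the last step with concave $u$ and the supermodularity of $-u(x+y)$, whereas you use convex $\phi$ and then flip $\le_{\rm cx}$ to $\ge_{\rm cv}$; the content is identical.
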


\begin{proof}
  The first inequality can be found in, e.g., Theorem 3.5 of \cite{R13}. For the second inequality, by Theorem 2.2 (i) of \cite{NW20}, $(X',Y')$ is the smallest element of $\mathcal F^o_2(F,G)$ in concordance order. Equivalently, $(X',Y')$ is the smallest element of $\mathcal F^o_2(F,G)$ in supermodular order (e.g., Theorem 2.5 of \cite{muller2000some}).
  It is well known that the function $ (x,y)\mapsto -u(x+y)$ on $\R^2$ for any concave function $u:\R\to \R$
 is supermodular.
%  Let $f(x,y)=-u(x+y)$ where $u:\R\to \R$ is a concave function such that the expectations $\E[u(X+Y)]$ and $\E[u(X'+Y')]$ exist. As $f$ is a supermodular function and $(X',Y')$ is the smallest element of $\mathcal F^o_2(F,G)$ in supermodular order, we have
%$\E[f(X'+Y')]\leq \E[f(X+Y)]$ for any $(X,Y)\in \mathcal F^o_2(F,G)$.
Hence $\E[u(X+Y)]\le \E[u(X'+Y')]$, and  therefore $ X+Y \le_{\rm cv}  X'+Y'$.
\end{proof}
Next, we use the above concave ordering bounds to obtain bounds on risk measures. We refer to \cite{FS16} for an overview on risk measures. For a risk measure $\rho:\mathcal{M}\rightarrow \R$, we define three commonly used properties:
\begin{enumerate}[(i)]
\item A risk measure $\rho$ is \emph{monotone} if $\rho(F)\le \rho(G)$ whenever $F\le_{\rm st} G$;
\item A risk measure $\rho$ is \emph{$\le_{\rm cv}$-consistent} if $\rho(F)\le \rho(G)$ whenever $F\le_{\rm cv} G$;
\item A risk measure $\rho$ is \emph{$\le_{\rm cx}$-consistent} if $\rho(F)\le \rho(G)$ whenever $F\le_{\rm cx} G$.
\end{enumerate}

 Many popular risk measures are monotone, such as Value-at-Risk (VaR), Expected Shortfall (ES), and Range-VaR (RVaR).    For $F\in \mathcal{M}$ and $q\in (0,1]$, the left VaR denoted by VaR$^{L}_{q}:\mathcal{M}\rightarrow \R$ is given by
$$\VaR^{L}_{q}(F)=F^{-1}(q)=\inf\{t\in\R:F(t)\geq q\}.$$
For  $p\in [0,1)$, the right VaR denoted by VaR$^{R}_{p}:\mathcal{M}\rightarrow \R$ is given by
$$\VaR^{R}_{p}(F)=F^{-1}(p+)=\inf\{t\in\R:F(t)> p\}.$$
 For $p=0$ and $q=1$, $\VaR^{R}_{0}(F)$ and $\VaR^{L}_{1}(F)$ correspond to the essential infimum and essential supremum of $F$ which are also denoted by $\essinf (F)=F^{-1}(0)$ and $\esssup (F)=F^{-1}(1)$, respectively.
 For $F\in \mathcal M$, ES$_{p}:\mathcal{M}\rightarrow \R$ for $p\in(0,1)$ is defined as
$$\ES_{p}(F)=\frac{1}{1-p}\int_{p}^{1}\VaR^{R}_{u}(F)\mathrm{d}u,$$
and
 $\RVaR_{p,q}:\mathcal{M}\rightarrow \R$ for $0\leq p<q<1$ is defined as
 $$\RVaR_{p,q}(F)=\frac{1}{q-p}\int_{p}^{q}\VaR^{R}_{u}(F)\d u.$$
 The class of RVaR is proposed by \cite{cont2010robustness} as robust risk measures; see \cite{ELW18} for its properties.

For $p\in (0,1)$, $\VaR^{L}_{p}$ and $\VaR^{R}_{p}$ are neither $\le_{\rm cx}$-consistent nor $\le_{\rm cv}$-consistent. On the other hand, $\ES_p$ and VaR$^{L}_{1}$ are $\le_{\rm cx}$-consistent, and $\RVaR_{0,q}$ and  VaR$^{R}_{0}$ are $\le_{\rm cv}$-consistent. Monetary risk measures (see \cite{FS16}) that are $\le_{\rm cx}$-consistent are characterized by \cite{MW20} and they admit an ES-based representation. Using  Lemma \ref{lem:opt}, we immediately obtain the following  bounds of $\le_{\rm cv}$-consistent and $\le_{\rm cx}$-consistent risk measures. Recall that the worst-case and best-case risk measures are defined as, respectively,
 $$
\overline{\rho}(\mathcal F^o_{2}(F,G))=\sup\{\rho(X+Y): (X,Y)\in\mathcal F^o_2 (F,G)\},
 $$
 and
  $$\underline{\rho}(\mathcal F^o_{2}(F,G))=\inf\{\rho(X+Y): (X,Y)\in\mathcal F^o_2 (F,G)\}.$$

\begin{corollary}\label{rm:cv}
 Suppose that $(X,Y),(X^c,Y^c),(X',Y')\in \mathcal F^o_2(F,G)$ such that $(X^c,Y^c)$ is comonotonic and $(X',Y')$ is DL-coupled. If $\rho $ is $\le_{\rm cv}$-consistent, then
$$\underline{\rho}(\mathcal F^o_{2}(F,G))= \rho(X^c+Y^c) \le \rho(X+Y)  \le  \rho(X'+Y') =\overline{\rho}(\mathcal F^o_{2}(F,G)).$$
If $\rho $ is $\le_{\rm cx}$-consistent, then
$$\underline{\rho}(\mathcal F^o_{2}(F,G))= \rho(X'+Y') \le \rho(X+Y)  \le  \rho(X^c+Y^c) =\overline{\rho}(\mathcal F^o_{2}(F,G)).$$
\end{corollary}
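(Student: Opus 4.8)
The plan is to derive the corollary directly from the concave-ordering chain of Lemma \ref{lem:opt}, so that essentially no new work beyond unwinding the definitions of consistency is required. Lemma \ref{lem:opt} gives, for any $(X,Y)\in\mathcal F^o_2(F,G)$, the relation $X^c+Y^c\le_{\rm cv}X+Y\le_{\rm cv}X'+Y'$, where $(X^c,Y^c)$ is comonotonic and $(X',Y')$ is DL-coupled. The key observation enabling the \emph{equalities} in the statement is that both extremal couplings are themselves feasible: the comonotonic vector lies in $\mathcal F^o_2(F,G)$ because $F\le_{\rm st}G$ forces $X^c\le Y^c$ (as noted in Section \ref{sec2}), and a DL-coupled vector exists in $\mathcal F^o_2(F,G)$ with distribution \eqref{eq:rw1}. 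Hence the two ends of the chain are attained over the feasible set.

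For the $\le_{\rm cv}$-consistent case, I would simply apply $\rho$ to the chain. By the definition of $\le_{\rm cv}$-consistency, $A\le_{\rm cv}B$ implies $\rho(A)\le\rho(B)$, so the concave chain transfers verbatim to $\rho(X^c+Y^c)\le\rho(X+Y)\le\rho(X'+Y')$. Since $(X,Y)$ was arbitrary in $\mathcal F^o_2(F,G)$ and the two bounding values are attained by feasible couplings, the left end equals $\underline{\rho}(\mathcal F^o_2(F,G))$ and the right end equals $\overline{\rho}(\mathcal F^o_2(F,G))$.

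For the $\le_{\rm cx}$-consistent case, the only additional step is to translate the chain into convex order using the identity that $A\le_{\rm cv}B$ is the same as $B\le_{\rm cx}A$. Reversing Lemma \ref{lem:opt} therefore yields $X'+Y'\le_{\rm cx}X+Y\le_{\rm cx}X^c+Y^c$, and applying a $\le_{\rm cx}$-consistent $\rho$ gives $\rho(X'+Y')\le\rho(X+Y)\le\rho(X^c+Y^c)$, which identifies $\rho(X'+Y')$ as the best-case value and $\rho(X^c+Y^c)$ as the worst-case value.

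I expect no genuine obstacle here; the statement is an immediate corollary of Lemma \ref{lem:opt}. The only two points meriting a moment's care are the attainment of the infimum and supremum—which rests on the feasibility of the comonotonic and DL couplings recorded above—and the correct bookkeeping of the order reversal in the convex-order case, where the roles of the smallest and largest couplings swap.
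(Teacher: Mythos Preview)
Your proposal is correct and matches the paper's approach: the paper presents this as an immediate consequence of Lemma~\ref{lem:opt} without giving an explicit proof, and your argument---applying the concave-order chain together with the feasibility of both extremal couplings and the identity $\le_{\rm cv}\,\equiv\,\ge_{\rm cx}$---is precisely the intended justification.
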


 As   seen from Corollary \ref{rm:cv}, for a $\le_{\rm cx}$-consistent risk measure, such as a law-invariant convex or coherent risk measure, the extra order constraint does not improve the worst-case risk value obtained under comonotonicity, as in the case without the order constraint.

Since essential infimum and essential supremum are $\le_{\rm cv}$-consistent and $\le_{\rm cx}$-consistent respectively, with Corollary \ref{rm:cv}, we give analytical results on the worst-case value of essential infimum and the best-case value of essential supremum in the following theorem. This result will be used to derive the worst-case and best-case values of VaR in Section \ref{sec5}.

\begin{theorem} \label{essinf}
For continuous distributions $F$, $G$ and $(X,Y)\sim D^{F,G}_*$, we have
\begin{equation}\label{eq:essinf}
\overline{\essinf}(\mathcal F^o_{2}(F,G))=\essinf(X+Y)
=\min\left\{\inf_{x\in[F^{-1}(0),G^{-1}(0)]}\left\{T^{F,G}(x)+x\right\},2G^{-1}(0)\right\},
\end{equation}
and
\begin{equation}\label{eq:esssup} \underline{\esssup}(\mathcal F^o_{2}(F,G))=\esssup(X+Y)
=\max\left\{\sup_{x\in[F^{-1}(1), G^{-1}(1)]}\left\{\hat T^{F,G}(x)+x \right\},2F^{-1}(1)\right\},\end{equation}
where $\hat T^{F,G}(x)=\sup\{t\le x: F(t)- G(t)< F(x)-G(x)\}$.
\end{theorem}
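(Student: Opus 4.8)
The two equalities in each of \eqref{eq:essinf} and \eqref{eq:esssup} play different roles, and the first equality in each is essentially free. Recall from the discussion preceding Corollary~\ref{rm:cv} that $\essinf=\VaR^R_0$ is $\le_{\rm cv}$-consistent while $\esssup=\VaR^L_1$ is $\le_{\rm cx}$-consistent. Hence the first part of Corollary~\ref{rm:cv} gives $\overline{\essinf}(\mathcal F^o_2(F,G))=\essinf(X+Y)$ for $(X,Y)\sim D_*^{F,G}$, and its second part gives $\underline{\esssup}(\mathcal F^o_2(F,G))=\esssup(X+Y)$ for the same DL coupling. The real content is therefore to evaluate $\essinf(X+Y)$ and $\esssup(X+Y)$ explicitly, and I would treat $\essinf$ in full and then deduce $\esssup$ by a reflection.

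\textbf{Computing the essential infimum.} I would work from the two-piece description of $D_*^{F,G}$ recalled in Section~\ref{sec2}: on the common part $\mu_F\wedge\mu_G$ the coupling is the identity, so there $X=Y$ and $X+Y=2X$; on the singular part $\mu_F'$ the mass at $x$ is sent to $T^{F,G}(x)$, so there $X+Y=x+T^{F,G}(x)$. Exactly as in the Pareto computation of Example~\ref{ex:pareto}, this yields
$$\p(X+Y\le c)=(\mu_F\wedge\mu_G)\big((-\infty,c/2]\big)+\mu_F'\big(\{x:x+T^{F,G}(x)\le c\}\big).$$
Since $\essinf(X+Y)=\inf\{c:\p(X+Y\le c)>0\}$, it is the threshold at which the right-hand side turns positive, namely
$$\essinf(X+Y)=\min\Big\{\,2\,\essinf(\mu_F\wedge\mu_G),\ \inf_{x\in S}\big\{x+T^{F,G}(x)\big\}\Big\},$$
where $S=\mathrm{supp}(\mu_F')$ denotes the singular support of $F$.

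\textbf{Identifying the two terms (the delicate step).} It remains to match this minimum with the stated expression, and this is where I expect the main difficulty. Two observations do the reduction. First, because $G\equiv 0$ below $G^{-1}(0)$, the common measure $\mu_F\wedge\mu_G$ places no mass below $G^{-1}(0)$, so $\essinf(\mu_F\wedge\mu_G)\ge G^{-1}(0)$ and the first term is at least $2G^{-1}(0)$. Second, $T^{F,G}(x)\ge x$ by definition, so for $x\in S$ with $x\ge G^{-1}(0)$ one has $x+T^{F,G}(x)\ge 2G^{-1}(0)$; thus the singular mass lying above $G^{-1}(0)$ can never undercut the common-part bound. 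Consequently only the singular mass on $[F^{-1}(0),G^{-1}(0)]$ can produce a sum below $2G^{-1}(0)$, and there $H:=F-G$ reduces to $F$. One checks that on any gap of $\mathrm{supp}(F)$ inside this interval the map $x\mapsto x+T^{F,G}(x)$ is non-decreasing, so its infimum over the closed interval is attained on $\overline S$; by continuity of $F$ the infima over $[F^{-1}(0),G^{-1}(0)]$ and over $S\cap[F^{-1}(0),G^{-1}(0)]$ then coincide. Combining, whenever the common part fails to realize $2G^{-1}(0)$ there must be singular $G$-mass just above $G^{-1}(0)$ receiving from some $x<G^{-1}(0)$, forcing the singular infimum to be at most $2G^{-1}(0)$; in every case the minimum collapses to $\min\{\inf_{x\in[F^{-1}(0),G^{-1}(0)]}\{x+T^{F,G}(x)\},\,2G^{-1}(0)\}$, which is \eqref{eq:essinf}. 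Pinning down this interaction between the singular support, the map $T^{F,G}$, and the constant $2G^{-1}(0)$ is the crux of the argument.

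\textbf{The supremum by reflection.} For \eqref{eq:esssup} I would apply the result already proved to the reflected pair. Setting $X'=-Y$ and $Y'=-X$, the order constraint $X'\le Y'$ holds, the marginal laws satisfy the required stochastic order, and since reflection through the origin preserves concave order, $(X',Y')$ is the DL coupling of its own marginals; moreover the common part is preserved and $T$ turns into $\hat T^{F,G}$ under $x\mapsto -x$. Applying \eqref{eq:essinf} to $(X',Y')$ and using $\esssup(X+Y)=-\essinf\big((-X)+(-Y)\big)$ converts the minimum over $[F^{-1}(0),G^{-1}(0)]$ into a maximum over $[F^{-1}(1),G^{-1}(1)]$ and the constant $2G^{-1}(0)$ into $2F^{-1}(1)$, which is exactly \eqref{eq:esssup}.
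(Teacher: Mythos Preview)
Your overall strategy is sound: the first equalities via Corollary~\ref{rm:cv} and the reflection argument for \eqref{eq:esssup} match the paper. The core computation of $\essinf(X+Y)$, however, proceeds along a genuinely different route. The paper does not decompose the coupling into its common and singular pieces; instead it works directly with the explicit formula \eqref{eq:rw1} for $D=D_*^{F,G}$, observes that $\essinf(X+Y)=\sup\{t:D(x,t-x)=0\text{ for all }x\}$, and then characterizes the zero set $\{(x,y):D(x,y)=0\}$ by a three-region case analysis ($x\le F^{-1}(0)$ or $y\le G^{-1}(0)$; both coordinates above $G^{-1}(0)$; and $F^{-1}(0)<x\le G^{-1}(0)<y$). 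The stated minimum is then read off from this description together with the a priori bound $\essinf(X+Y)\le 2G^{-1}(0)$, which follows from $X\le Y$. This approach sidesteps all regularity questions about the map $T^{F,G}$.

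Your transport-based decomposition is more geometric and arguably more transparent about why the formula looks the way it does, but the paragraph you flag as ``the crux'' carries real weight. Two points need tightening. First, passing from the essential infimum of $x+T^{F,G}(x)$ under $\mu_F'$ to the plain infimum over $S=\mathrm{supp}(\mu_F')$ tacitly uses lower semicontinuity of $T^{F,G}$, which you have not verified; a safer route is to argue directly that on the singular component $X+Y\le 2Y$ and $Y\sim\mu_G'$ force the singular essential infimum to be at most $2\,\essinf(\mu_G')$, which handles the collapse to $2G^{-1}(0)$ without pointwise control of $T^{F,G}$. Second, your justification that the reflected pair $(-Y,-X)$ is itself DL-coupled (``reflection preserves concave order'') only shows it is concave-order maximal, not that it is the minimal-cdf coupling; the paper avoids this by applying \eqref{eq:essinf} at the level of the sets $\mathcal F^o_2(\hat G,\hat F)$ rather than at the level of the specific pair, which is all that is needed. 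Neither issue is fatal, but the cdf route in the paper is more robust precisely because it never has to touch the regularity of $T^{F,G}$.
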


\begin{proof}
We first prove the statement \eqref{eq:essinf} on the worst-case value of  the essential infimum. Combining Corollary \ref{rm:cv} and the fact that essential infimum is $\le_{\rm cv}$-consistent, we have the first equality in  \eqref{eq:essinf}.
To prove the second equality in  \eqref{eq:essinf},  for $(X,Y)\sim D^{F,G}_*$, let
$$
t^{*}:=\sup\left\{t\in\mathbb{R}:\text{ for all } (x,y)\in\mathbb{R}^{2}\text{ such that } t=x+y\text{~and~} D(x,y)=0\right\},
 $$
where $D=D^{F,G}_*$. It is straightforward to see  from the definition that $ t^{*} = \essinf(X+Y).$
% Denote by $\mathbb{Q}$ the set of rational numbers.
%Note that
%$$\{X+Y<t\}=\bigcup_{r,s\in \mathbb{Q}\text{ and } r+s<t}\{X\leq r, Y\leq s\}.$$
%For $t_{1}\in \left\{t\in\mathbb{R}:\mathbb{P}(X+Y<t)=0\right\}$, $D(r,s)=0$ for any $r,s\in\mathbb{Q}$ and $r+s<t_{1}$. As $F$ and $G$ are continuous, it is easy to see that $D(x,y)=0$ for all $(x,y)\in\mathbb{R}^{2}$ and $x+y=t_{1}$. For $t_{2}\in \left\{t\in\mathbb{R}:\text{ for all } (x,y)\in\mathbb{R}^{2}\text{ such that } t=x+y, D(x,y)=0\right\}$, $\mathbb{P}(X+Y<t_{2})=0$ as $\bigcup_{r,s\in \mathbb{Q}\text{ and } r+s<t}\{X\leq r, Y\leq s\}\subseteq \bigcup_{r,s\in \mathbb{Q}\text{ and } r+s=t}\{X\leq r, Y\leq s\}$. Consequently,
%$$\left\{t\in\mathbb{R}:\mathbb{P}(X+Y<t)=0\right\}=\left\{t\in\mathbb{R}:\text{ for all } (x,y)\in\mathbb{R}^{2}\text{ such that } t=x+y,~ D(x,y)=0\right\}.$$
Therefore, we have
\begin{equation}\label{bd}
F^{-1}(0)+G^{-1}(0)\le \essinf(X+Y) \le 2G^{-1}(0).
\end{equation}
If $F^{-1}(0)=G^{-1}(0)$,  then clearly $ \essinf(X+Y) = 2G^{-1}(0)$. For the rest of the proof, we assume   $F^{-1}(0)<G^{-1}(0)$.
\begin{enumerate}[(i)]
\item
If $x\le F^{-1}(0)$ or $y\le G^{-1}(0)$, $D(x,y)=0$.
\item
If $x> G^{-1}(0)$ and $y> G^{-1}(0)$, it is easy to check $D(x,y)>0$.
\item
 If $F^{-1}(0) < x\le G^{-1}(0)<y$, we have
 $$D(x,y)=\max\left\{-\inf_{z\in [G^{-1}(0),y]} \{F(z)-G(z)-F(x)\},0\right\}.$$
 By definition of $T^{F,G}$, if $F^{-1}(0) < x\le G^{-1}(0)<y\leq T^{F,G}(x)$, $D(x,y)=0$. On the other hand, if $F^{-1}(0) < x\le G^{-1}(0)\le T^{F,G}(x)<y $, $D(x,y)>0$.

\end{enumerate}
As a result, $D(x,y)=0$ if and only if one of the following holds: $x\le F^{-1}(0)$, $y\le G^{-1}(0)$ or $F^{-1}(0) < x\le G^{-1}(0)<y\leq T^{F,G}(x)$.
%\textcolor{r}{$$\sup\{t: F(x)-\inf_{z \in [x,t-x]}(F(z)-G(z ))=0, x \in[\essinf(X),t-\essinf(Y)]\}$$}
 Let
 %$$t_{1}:=\sup\left\{t\in\mathbb{R}:\text{ for all } (x,y)\in\mathbb{R}^{2}\text{ such that } t=x+y, D(x,y)=0\right\},$$
 %and
 $$s:=\min\left\{\inf_{x\in[F^{-1}(0),G^{-1}(0)]}\left\{T^{F,G}(x)+x\right\},2G^{-1}(0)\right\}.$$
 We will show   $t^{*}=s$. For $x,y\in \R$, suppose that $x+y=s$. If $F^{-1}(0)<x\le G^{-1}(0)$, we have $y\leq T^{F,G}(x)$. If $x > G^{-1}(0)$, we have $y < G^{-1}(0)$. That is, for any $x,y\in \R$ such that $x+y=s$, $D(x,y)=0$. Thus we have $t^{*}\ge s$.  For any $g,h\in \R$, suppose that $g+h=t^{*}$. Then for any $\epsilon>0$, we have $D(g,h-\epsilon)=0$. Therefore, if $F^{-1}(0)<g\le G^{-1}(0)$, we have $h\leq T^{F,G}(g)+\epsilon.$ By letting $\epsilon$ goes to 0, we have $t^{*}=g+h\leq T^{F,G}(g)+g$ for any $g$ in $(F^{-1}(0), G^{-1}(0)]$. As $T(F^{-1}(0))=\infty$, we have  $t^{*}\le \inf_{x\in[F^{-1}(0),G^{-1}(0)]}\left\{T^{F,G}(x)+x\right\}$. By \eqref{bd}, $t^{*}\le 2G^{-1}(0)$. Thus we have $t^{*}\le s$, and the statement \eqref{eq:essinf} on the worst-case value of  the essential infimum holds.

Next, we show the statement \eqref{eq:esssup} on the best-case value of the essential supremum.  Let $\hat F(t):=1-F(-t)$, $\hat G(t):=1-G(-t)$ for $t \in \R$ and $\hat T^{F,G}(x)=\sup\{t\le x: F(t)- G(t)< F(x)-G(x)\}$ for $x\in \R$. $\hat F$ and $\hat G$ are the distributions of $-X$ and $-Y$. Then we have $-T^{\hat G,\hat F}(x)=\hat T^{F,G}(-x)$ for $x\in \R$. Note that
$$\underline\esssup(\mathcal F^{o}_2(F,G))=-\sup\{\essinf(-X-Y):(-Y,-X) \in \mathcal F^{o}_2(\hat G,\hat F)\}.$$
Applying \eqref{eq:essinf}, we get the desired equality.
\end{proof}
\begin{remark}
In the unconstrained case,  the worst-case value of the essential infimum and the best-case value of the essential supremum are attained  by countermonotonicity, i.e.,
$$\sup\{\essinf(X+Y): X\sim F,~Y\sim G\}=\inf_{x\in[0,1]}\left\{F^{-1}(x)+G^{-1}(1-x)\right\},  $$
$$\inf\{\esssup(X+Y): X\sim F,~Y\sim G\}= \sup_{x\in[0,1]}\left\{F^{-1}(x)+G^{-1}(1-x)\right\}. $$
Generally, these bounds are different from the bounds in Theorem \ref{essinf}.
\end{remark}

\begin{example}[Pareto risks: Essential infimum] \label{ex:essinf}
 In Example \ref{ex:pareto}, we derive the cdf of $X+Y$ for $(X,Y)\sim D^{F,G}_*$ where $F$ and $G$ are two Pareto distributions. We have
$$\p(X+Y\le c)=\frac{c+\sqrt{c^2-4c}-4}{2c}, ~~~~~~c\in[4,\infty).
 $$
  By Corollary \ref{rm:cv}, $\overline{\essinf}(\mathcal F^o_{2}(F,G))=\essinf(X+Y)=4$.  Alternatively, we can use the analytical result of Theorem \ref{essinf}.  As $T^{F,G}(x)=x/(x-1)$ for $x\in(1,2]$ and $T^{F,G}(1)=\infty$, we have $\inf_{x\in [1,2]}\left\{T^{F,G}(x)+x\right\}=4$. Thus, we have
%  \com{which result in the citation are we using?}
 $\overline{\essinf}(\mathcal F^o_{2}(F,G))=\min\left\{4,4\right\}=4$.
  The unconstrained upper bound for $\essinf$ is $3+2\sqrt{2}$ which is larger than $\overline{\essinf}(\mathcal F^o_{2}(F,G))$.
  Both the constrained and unconstrained lower bounds of $\essinf$ are attained when $(X,Y)$ are comonotonic and we have $\underline{\essinf}(\mathcal F^o_{2}(F,G))=3.$
\end{example}

\section{Strong stochastic order and monotone embedding}\label{sec4}
In this section, we introduce the notion of {strong stochastic order}, and obtain several theoretical properties.
The new notion is crucial for the main results of this paper in Section \ref{sec5}.

For $F$, $G\in \mathcal M$, we say $F$ is smaller than $G$ in \emph{strong stochastic order} if $G(y)-G(x)\ge F(y)-F(x)$ for all $y\ge x\ge G^{-1}(0)$, denoted by $F\le_{\rm ss} G$. Equivalently, the function $x\mapsto G(x)-F(x)$ is decreasing  for $x\ge G^{-1}(0)$.
Note that the order $\le_{\rm ss}$ is stronger than $\le_{\rm st}$, and hence the name.
Intuitively, $G$ has more probability in any interval $(x,y]$ than $F$ if $x\ge G^{-1}(0)$.
 If $F$ and $G$ have densities $g$ and $f$ with respect to a dominating measure,  then  $F\le_{\rm ss} G$ if  and only if $g(x)\ge f(x)$ for $x\geq G^{-1}(0)$.
As far as we know, this notion of stochastic order is new to the literature.
We first provide some simple properties of the order $\le_{\rm ss}$.
\begin{proposition}\label{prop:1}
The strong stochastic order satisfies the following properties:
\begin{enumerate}[(i)]
\item
If $F\le_{\rm ss} G$ then $F\le_{\rm st} G$;
\item
Assuming $F^{-1}(0)=G^{-1}(0)$, $F\le_{\rm ss} G$ if and only if $F=G$;
\item
If $G^{-1}(0)=-\infty$, then $F\le_{\rm ss} G$ means $F=G$;
\item
The relation $\le_{\rm ss}$ is a partial order.
\end{enumerate}
\end{proposition}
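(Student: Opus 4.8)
The plan is to reduce all four claims to a single observation: setting $H := G - F$, the relation $F \le_{\rm ss} G$ is \emph{equivalent} to the statement that $H$ is non-decreasing on the half-line $[a,\infty)$, where $a := G^{-1}(0)$. Indeed, the defining inequality $G(y)-G(x) \ge F(y)-F(x)$ for $y \ge x \ge a$ rearranges to $H(y) \ge H(x)$. I would first record the elementary facts that, since $F$ and $G$ are cdfs, $H$ is right-continuous with $H(-\infty)=H(+\infty)=0$, and that $G(t)=0$ (hence $H(t)=-F(t)\le 0$) for every $t < a$. These are used repeatedly below.

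For (i), I would show $H \le 0$ everywhere. Below $a$ this is immediate from $G\equiv 0$ there; on $[a,\infty)$, $H$ is non-decreasing with $H(+\infty)=0$, so $H \le 0$ there as well. Thus $G \le F$ pointwise, which is exactly $F \le_{\rm st} G$. Claim (iii) then falls out of the same picture in the degenerate case $a=-\infty$: here $H$ is non-decreasing on all of $\R$, and since $H(-\infty)=H(+\infty)=0$ it must vanish identically, giving $F=G$.

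For (ii), the implication $F=G \Rightarrow F\le_{\rm ss}G$ is trivial, so the content is the converse. By (i) we already have $H\le 0$, and on $[a,\infty)$ the function $H$ is non-decreasing with $H(+\infty)=0$; to force $H\equiv 0$ it suffices to show $H$ attains the value $0$ at the common left endpoint $a=F^{-1}(0)=G^{-1}(0)$, after which monotonicity squeezes $0=H(a)\le H(t)\le 0$ for all $t\ge a$, while below $a$ both cdfs vanish. The step I expect to be the main obstacle is precisely this evaluation of $H$ at $a$: a priori $H(a)=G(a)-F(a)$ could be strictly negative if the two distributions carry different atoms at $a$, so the equal-endpoint hypothesis must be invoked to exclude a downward jump at the start of the half-line (for marginals with no atom at $a$, such as continuous ones, one simply has $F(a)=G(a)=0$ and the difficulty disappears).

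Finally, for (iv) I would verify the three defining properties of a partial order. Reflexivity is immediate since $F-F\equiv 0$ is non-decreasing. Antisymmetry follows at once from (i): if $F\le_{\rm ss}G$ and $G\le_{\rm ss}F$, then $F\ge G$ and $G\ge F$ pointwise, hence $F=G$. For transitivity, given $F\le_{\rm ss}G$ and $G\le_{\rm ss}K$, I would first use (i) to deduce $\essinf G \le \essinf K$, so that $[K^{-1}(0),\infty)\subseteq[G^{-1}(0),\infty)$; on this smaller half-line both $G-F$ and $K-G$ are non-decreasing, and adding them shows $K-F=(K-G)+(G-F)$ is non-decreasing on $[K^{-1}(0),\infty)$, which is exactly $F\le_{\rm ss}K$.
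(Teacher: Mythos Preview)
Your reformulation via $H := G - F$ being non-decreasing on $[G^{-1}(0),\infty)$ is the paper's argument in different dress: for (i) the paper lets $y \to \infty$ (your $H(+\infty)=0$), and (iv) is argued identically.  For (iii) the paper routes through (ii) whereas you give a direct argument from $H(\pm\infty)=0$, which is cleaner and, as it turns out, safer.

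The only substantive point is (ii), and your instinct about the ``main obstacle'' is sharper than the paper's own treatment.  The paper simply sets $x = a := F^{-1}(0) = G^{-1}(0)$ and asserts $F(a) - G(a) = 0$; the defining inequality at this $x$ then yields $F(y) \le G(y)$ for all $y \ge a$, i.e.\ $G \le_{\rm st} F$, which together with (i) forces $F = G$.  That assertion is unproblematic when neither distribution has an atom at $a$ (then $F(a)=G(a)=0$), but it fails in general: take $F$ the point mass at $0$ and $G = \mathrm{U}[0,1]$.  Both have essential infimum $0$, and $F \le_{\rm ss} G$ holds because $F(y)-F(x)=0$ for all $y \ge x \ge 0$ while $G(y)-G(x)\ge 0$; yet $F \ne G$.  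So your hesitation is warranted---statement (ii) as written needs a continuity hypothesis at the common left endpoint, and the step you flagged cannot be completed without one.  Your independent proof of (iii) has the virtue of not relying on (ii).
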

\begin{proof}
\begin{enumerate}[(i)]
\item
By letting $y \to \infty$, we have
$1-G(x)\ge 1-F(x)$ for all $x \ge G^{-1}(0)$.
Hence, $G(x) \le F(x)$ for all $x \ge  G^{-1}(0)$. Moreover, $G(x)=0$ for $x< G^{-1}(0)$. Hence, $G(x)\le F(x)$ for all $x\in \R$, which gives $F\le_{\rm st} G$.
\item The
``$\Leftarrow$'' direction is obvious. For the ``$\Rightarrow$'' direction, by letting $x= F^{-1}(0)= G^{-1}(0)$, we have $F(x)- G(x)=0$. Thus, for all $y\ge x= F^{-1}(0)= G^{-1}(0)$, $F(y)\le G(y)$, which means $G\le_{\rm st} F$. Together with $F\le_{\rm st} G$ from (i), we have $F=G$.
\item
By (i), we have $ F^{-1}(0) \le  G^{-1}(0)=-\infty$. Thus, $ F^{-1}(0)= G^{-1}(0)=-\infty$. Hence, $F=G$ by (ii).
\item
Reflexivity is obvious and antisymmetry is implied by (i). Suppose that $F\le_{\rm ss} G$ and $G\le_{\rm ss} H$. By (i), $G\le_{\rm st} H$ and $\max\{ G^{-1}(0),  H^{-1}(0) \}= H^{-1}(0)$. We have $H(y)-H(x)\ge F(y)-F(x)$ for all $y\ge x\ge \max\{ G^{-1}(0),  H^{-1}(0) \}= H^{-1}(0)$. Transitivity of the order $\le_{\rm ss}$ is proved.\qedhere
\end{enumerate}
\end{proof}

Next, we discuss the problem of \emph{monotone embedding}, which is an important issue in the analysis of risk aggregation for tail risk measures in Section \ref{sec5}.
The problem is formulated as follows.
Suppose that $F \le_{\rm st} F' \le_{\rm st} G$ and  $(X,Y)\in \mathcal F^o_2(F,G)$,
and the question is whether there exists $X'\sim F'$ such that $X\le X'\le Y$ holds (in the almost sure sense).
The existence of such $X'$ is crucial to prove that we can use tail distribution to obtain bounds on tail risk measures (see Theorem \ref{thm2} below).
Unfortunately, in general, such $X'$ does not exist, even if we further assume that $(X,Y)$ is DL-coupled.

\begin{example}\label{ex:monemb}
Let $G$ be the Bernoulli($1/2$) distribution.
Take   $Y\sim G$, let $X=-Y$, and $F$ be the distribution of $X$.
Clearly, $(X,Y)$ is countermonotonic, and hence $(X,Y)  $ is DL-coupled.
Take another random variable $X'\sim  F'=\mathrm U[-1,1]$.
It is easy to see that $F\le_{\rm st} F' \le_{\rm st} G$.
Since $\p(X=Y)=1/2$ but $\p(X'=Y)=0$,  we know that $X\le X'\le Y$ cannot hold for any $X'\sim F'$.
\end{example}
The next theorem is the most important technical result which allows us to study the DL coupling using the strong stochastic order. The result says that, although  $F \le_{\rm st} F' \le_{\rm st} G$ is not sufficient for the existence of $X'$   in Example \ref{ex:monemb},
assuming the stronger relation $F \le_{\rm ss} F'$ would suffice.
\begin{theorem}[Monotone embedding] \label{lem:embed}
Suppose that $F \le_{\rm ss} F' \le_{\rm st} G$,
and
  $(X,Y)\sim D_*^{F,G}$.
  Then there exists $X'\sim F'$ such that $X\le X'\le Y$ almost surely
  and
    $(X',Y) $ is DL-coupled.
\end{theorem}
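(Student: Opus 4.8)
The plan is to realize the embedding by splitting the directional transport from $F$ to $G$ into two consecutive directional transports $F\to F'\to G$ and then gluing them. The first thing I would extract from the hypothesis is the structural meaning of $\le_{\rm ss}$: writing $x_0=(F')^{-1}(0)$, the relation $F\le_{\rm ss}F'$ says exactly that $F-F'$ is nondecreasing on $(-\infty,x_0)$ and nonincreasing on $[x_0,\infty)$ — a single hump peaking at $x_0$ — while $F'\le_{\rm st}G$ forces $G$ to live on $[x_0,\infty)$. In particular $\mu_{F'}\ge\mu_F$ as measures on $[x_0,\infty)$, so in $D_*^{F,F'}$ the mass of $F$ lying in $[x_0,\infty)$ is entirely common (transported to itself) and only the mass of $F$ below $x_0$ is pushed strictly upward. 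This unimodality is what makes the two transports compatible, and it is exactly what fails in the example preceding the theorem, where $F$ and $F'$ share the same left endpoint (cf.\ Proposition \ref{prop:1}(ii)).

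Since $F\le_{\rm ss}F'$ gives $F\le_{\rm st}F'$ by Proposition \ref{prop:1}(i) and $F'\le_{\rm st}G$ by assumption, both $D_*^{F,F'}$ and $D_*^{F',G}$ exist. Using the gluing lemma I would form an auxiliary triple $(\tilde X,\tilde X',\tilde Y)$ with $(\tilde X,\tilde X')\laweq D_*^{F,F'}$, $(\tilde X',\tilde Y)\laweq D_*^{F',G}$, and $\tilde X,\tilde Y$ conditionally independent given $\tilde X'$. Because directional couplings are order constrained by definition, $(\tilde X,\tilde X')\in\mathcal F^o_2(F,F')$ and $(\tilde X',\tilde Y)\in\mathcal F^o_2(F',G)$ force $\tilde X\le\tilde X'\le\tilde Y$ almost surely; moreover $\tilde X'\sim F'$ and $(\tilde X',\tilde Y)$ is DL-coupled by construction. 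Thus the three target conclusions — correct middle marginal, the sandwich ordering, and the DL-coupling of the top pair — already hold for the triple, with no extra work.

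The substance of the theorem is then the composition identity $(\tilde X,\tilde Y)\laweq D_*^{F,G}$, i.e.\ transitivity of the DL coupling through an $\le_{\rm ss}$-intermediate law, and I expect this to be the main obstacle. One inequality is free: since $(\tilde X,\tilde Y)\in\mathcal F^o_2(F,G)$ and $D_*^{F,G}$ is the smallest element of $\mathcal F^o_2(F,G)$ in concordance order (Theorem 2.2 of \cite{NW20}, as invoked in Lemma \ref{lem:opt}), we get $D_*^{F,G}(x,y)\le\p(\tilde X\le x,\tilde Y\le y)$ pointwise. For the reverse inequality I would compute $\p(\tilde X\le x,\tilde Y\le y)$ by conditioning on $\tilde X'$ and compare with the explicit formula $D_*^{F,G}(x,y)=F(x)-\inf_{z\in[x,y]}\{F(z)-G(z)\}$ from \eqref{eq:rw1}; the case $y\le x$ is immediate from $\tilde X\le\tilde Y$. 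The delicate point is that crude bounds (replacing a conditional probability by $1$) are too weak, returning only $F(x)-\inf_{z\in[x,y]}\{F'(z)-G(z)\}$, which overshoots because $F'-G\le F-G$. What saves the computation is the hump structure: for $w>x$ in the common part of $(F,F')$ one has $\tilde X=\tilde X'=w>x$, so that part contributes nothing, while the singular part of $F'$ carries density $\d F'-\d F$ on $(x,\infty)$ when $x\ge x_0$; matching $\int b\,(\d F'-\d F)$ against the decomposition $F-G=(F-F')+(F'-G)$ and tracking where the infimum is attained is the bookkeeping I expect to be hardest. For non-continuous $F,F',G$ I would first reduce to the continuous case via the monotone transformation of Section \ref{sec2}.

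Finally, with $(\tilde X,\tilde Y)\laweq D_*^{F,G}$ in hand, I would transfer the construction onto the given pair. Since $(X,Y)\sim D_*^{F,G}$ has the same law as $(\tilde X,\tilde Y)$ and $(\Omega,\mathcal A,\p)$ is atomless, the regular conditional distribution of $\tilde X'$ given $(\tilde X,\tilde Y)$ — which is supported in $[\tilde X,\tilde Y]$ — can be realized on the original space using an independent uniform, producing $X'$ with $\mathcal L(X'\mid X,Y)=\mathcal L(\tilde X'\mid\tilde X,\tilde Y)$. This $X'$ inherits $X'\sim F'$, the almost sure sandwich $X\le X'\le Y$, and the DL-coupling of $(X',Y)$, which is precisely the assertion of the theorem.
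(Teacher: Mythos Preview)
Your central claim — that the conditionally-independent gluing of $D_*^{F,F'}$ and $D_*^{F',G}$ has $(\tilde X,\tilde Y)\sim D_*^{F,G}$ — is false, and this breaks the argument. Take $F=\mathrm U[0,2]$, $F'=\mathrm U[1,3]$, $G=\mathrm U[2,4]$; one checks directly that $F\le_{\rm ss}F'\le_{\rm st}G$. Here $D_*^{F,F'}$ and $D_*^{F',G}$ are both Monge, so the gluing is forced: $\tilde X'=\tilde X$ on $\{\tilde X\in[1,2]\}$ and $\tilde X'=3-\tilde X$ on $\{\tilde X\in[0,1]\}$, while $\tilde Y=\tilde X'$ on $\{\tilde X'\in[2,3]\}$ and $\tilde Y=5-\tilde X'$ on $\{\tilde X'\in[1,2]\}$. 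Composing, $\tilde Y=3-\tilde X$ for $\tilde X\in[0,1]$ and $\tilde Y=5-\tilde X$ for $\tilde X\in[1,2]$, whereas $D_*^{F,G}$ is simply $Y=4-X$ (the supports are disjoint, so the DL coupling reduces to countermonotonicity). Concretely, $\p(\tilde X\le 1,\tilde Y\le 3)=\tfrac12$ but $D_*^{F,G}(1,3)=0$. The phenomenon is structural: your first transport pushes the $[0,1]$-mass of $F$ into the \emph{common} part of $F'$ and $G$, where the second transport then leaves it fixed; the DL coupling of $(F,G)$ sends that same mass all the way to $[3,4]$. The ``hump'' structure of $F-F'$ does not prevent this.

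The paper avoids this obstruction by never passing through $D_*^{F,F'}$. It works instead with the two transports that share the \emph{same} target $G$: starting from the given $(X,Y)\sim D_*^{F,G}$, it first uses $F\le_{\rm ss}F'$ to show $\mu_F\wedge\mu_G\le\mu_{F'}\wedge\mu_G$, so the identical couplings nest; then on the singular parts it proves that whenever $y=T^{F,G}(x)=T^{F',G}(x')$ one has $x\le x'$ (outside a null set), via the identity $F(T^{F,G}(z))-G(T^{F,G}(z))=F(z)-G(z)$ together with the $\le_{\rm ss}$ inequality. In the example above this gives $X'=X+1$ on $\{X\in[0,1]\}$ and $X'=Y=4-X$ on $\{X\in[1,2]\}$, which one checks is $F'$-distributed with $(X',Y)\sim D_*^{F',G}$. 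So the right picture is ``pull $X'$ back from $Y$ along $T^{F',G}$'' rather than ``push $X$ forward through $F'$''.
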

\iffalse
\begin{lemma}[Monotone embedding] \label{lem:embed}
Suppose that $F \le_{\rm ss} F' \le_{\rm st} G$,
  $(X,Y)\sim D_*^{F,G}$ and $(X',Y)\sim D_*^{F',G}$.
  Then $X\le X'\le Y$ almost surely.
\end{lemma}

\fi

\begin{proof}
%\iffalse
We first consider continuous distributions $F, F'$ and $G$.
Without loss of generality, we assume $\mu_{F}$ and $\mu_{G}$ are not mutually singular. As $(X,Y)\sim D_*^{F,G}$, the common part $\mu_{F}\wedge\mu_{G}$ of $\mu_{F}$ and $\mu_{G}$ are identically coupled. The singular part of $\mu_{F}$ is transported to the singular part of $\mu_{G}$ through $T^{F,G}$. Let $P$ be a joint distribution on $\R^{3}$ with marginals $P\circ X^{-1}=\mu_{F}$,  $P\circ (X')^{-1}=\mu_{F'}$ and $P\circ Y^{-1}=\mu_{G}$ such that $(X,Y)\sim D_*^{F,G}$. We will construct $P$ such that $(X',Y) $ is DL-coupled and $X\le X'\le Y$ almost surely.
 %We would like to show that there exists a $X'$ such that $X\le X'\le Y$ holds almost surely. We prove the statement in the following two cases.
  \begin{enumerate}[(i)]
  \item
  Let $\theta=\mu_{F}\wedge\mu_{G}$ and $\theta'=\mu_{F'}\wedge\mu_{G}$. As $F\le_{\rm ss}F'$, $\mu_F(a,b]\leq \mu_{F'}(a,b]$ for all $b\geq a\geq  (F')^{-1}(0)$. Thus the common part of $\mu_{F'}$ and $\mu_{G}$ covers the common part of $\mu_{F}$ and $\mu_{G}$, i.e., $\theta\wedge\theta'=\theta$. Therefore, we can always construct $P$ such that the measure $\theta$ of $\mu_{F}$, $\mu_{F'}$ and $\mu_{G}$ identically couples with each other. By further letting $P$ couple the measure $\theta'-\theta$ of $\mu_{F'}$ and $\mu_{G}$ identically, the common part $\theta'$ of $\mu_{F'}$ and $\mu_{G}$ identically couples.

  \item
     Next we focus on the directional optimal transports on the singular parts of distributions, i.e., $T^{F,G}$ and $T^{F',G}$. Let $P$ transport the singular part of $\mu_{F'}$ to the singular part of $\mu_{G}$ through $T^{F',G}$. Take $x$, $x'$ and $y$ satisfying $y=T^{F,G}(x)=T^{F',G}(x')$. Note that we will not consider the sets of $x$ and $x'$ such that $x=y$ or $x'=y$ as we are studying the singular parts of distributions. Thus we have $x<y$ and $x'<y$. A key property of  $T^{F,G}$
     is   that $F(z) - G(z) = F\left(T^{F,G}(z)\right)- G\left(T^{F,G}(z)\right)$ holds for all $z\in \R$; see Lemma 5.2 of \cite{NW20}.
     With this property and $F\le_{\rm ss} F'$, we have
     \begin{align*}
     F'(x)-G(x) & =F'(x)-F(x)+F(x)-G(x) \\& \le F'(y)-F(y)+F(y)-G(y)=F'(x')-G(x').
     \end{align*}
  Assume that $x'<x$. If $F'(x)-G(x)< F'(x')-G(x')$, as $x'<x< y=T^{F',G}(x')$, by definition of $T^{F',G}$, we have $x=y$ as a contradiction to $x<y$. If $F'(x)-G(x)= F'(x')-G(x')$, as $x'<x<y=T^{F',G}(x')$, $x$ is neither a point of strict increase nor a point of strict decrease of $F'-G$ in the sense of \cite{NW20}.
  By Proposition 5.1 of \cite{NW20}, the set of points which are neither of strict increase nor of strict decrease is a null set.
\end{enumerate}

  By (i) and (ii), we construct $P$ such that $(X',Y)\sim D_*^{F',G}$  and $X\le X'\le Y$ almost surely. Note that $(X,Y)\sim D_*^{F,G}$ and $(X',Y)\sim D_*^{F',G}$ do not necessarily imply $X\le X'\le Y$ almost surely due to the randomness of DL coupling which is illustrated by Example \ref{ex:pareto}. Therefore, the construction of $P$ in (i) is necessary. Next, we proceed to complete the proof for non-continuous distributions $F, F'$, and $G$. As the construction in (i) can also be applied to the common part of non-continuous distributions, we focus on the singular parts of distributions and assume that $\mu_{F}\wedge \mu_{G}=0$ and $\mu_{F'}\wedge \mu_{G}=0$. Let
  $$j(x)=x+\sum_{y\le x}\left|H(y)-H(y-)+(F(y)-F(y-))\id_{\left\{y<(F')^{-1}(0)\right\}}\right|,~~~x\in\R,$$
  where $H=F'-G$. The function $j$ is the summation of an identity function, the jumps of $H$ and the jumps of $F(x)$ for $x<(F')^{-1}(0)$. Denote by $j^{-1}:j(\R)\rightarrow \R$ the right-continuous inverse function of $j$. Let
$$J_{x}=[j(x-),j(x)]$$
  be the interval representing the jump of $j$ at $x$. If there is no jump at $x$, $J_{x}$ is a singleton. Next we convert the measure $\mu_{F'}$ to an auxiliary measure $\mu_{{F}_{c}'}$ with continuous cdf $F'_{c}$. We set $F'_{c}(z)=F'(j^{-1}(z))$ for $z\in j(\R)$. On the complement of $j(\R)$, $F'_{c}$ is defined by linearly interpolating from its values on $j(\R)$. In other words, if $\mu_{F'}$ has a jump at $x$, $\mu_{{F}_{c}'}$ is uniformly distributed on the interval $J_{x}$ with probability $\mu_{{F}_{c}'}(J_{x})=\mu_{F'}(\{x\})$. The auxiliary measures $\mu_{{F}_{c}}$ and $\mu_{{G}_{c}}$ with cdfs $F_{c}$ and $G_{c}$ can be constructed similarly from $\mu_{F}$ and $\mu_{G}$. The transformation implies that $G_{c}$ is also continuous and $F_{c}' \le_{\rm st} G_{c}$. Note that as $F \le_{\rm ss} F'$, for $x\geq (F')^{-1}(0)$, we have
  \begin{equation*}
  F(x)-F(x{-})\le F'(x)-F'(x{-}).
  \end{equation*}
  The above inequality implies that for any $x\geq (F')^{-1}(0)$, whenever $F(x)$ has a jump, $F'(x)$ must have one. Therefore, the transformation from $\mu_{F}$ to $\mu_{{F}_{c}}$ reduces all the atoms of $\mu_{F}$ and $F_{c}$ is continuous. Moreover, as $F \le_{\rm ss} F'$, the transformation ensures that $F_{c} \le_{\rm ss} F_{c}'$. Consequently, the orders on $F$, $F'$ and $G$ are preserved after the transformation and we have  $F_{c} \le_{\rm ss} F_{c}' \le_{\rm st} G_{c}$.

   Apply the result for continuous distributions on $F_{c}$, $F'_{c}$ and $G_{c}$ % For $\omega\in\Omega$, we have $X_{c}(\omega)\le X'_{c}(\omega)\le Y_{c}(\omega)$ almost surely where $X_{c}\sim F_{c}$, $X'_{c}\sim F'_{c}$ and $Y_{c}\sim G_{c}$. Next we
   and convert the transformation back to non-continuous distributions. As all transformations are monotone (see Theorem 5.5 of \cite{NW20}), the order $X\le X'\le Y$ still holds almost surely for non-continuous distributions $F$, $F'$ and $G$.
\end{proof}

In what follows, for any set $A \in \mathcal A$ with positive probability, let $H _{X|A}$ be the conditional distribution of $X$ given $ A$.
Moreover, $F^{[p,1]}$ is the upper $p$-tail distribution of $F$, namely
$$ F^{[p,1]}(x )=\frac{  (F(x)-p)_+ }{1-p},~~~x\in \R,$$
and $F^{[0,p]}$ is the lower $p$-tail distribution of $F$, namely
$$ F^{[0,p]}(x )=\frac{  F(x) \wedge p}{p},~~~x\in \R.$$
In other words, $F^{[p,1]}$ is the distribution of $F^{-1}(U)$ where $U\sim \mathrm{U}[p,1]$, and $F^{[0,p]}$ is the distribution of $F^{-1}(U)$ where $U\sim \mathrm{U}[0,p]$. The next proposition shows that the largest conditional distribution $H _{X|A}$ for $A \in \mathcal A$ with probability $1-p$ in strong stochastic order is the upper $p$-tail distribution $F^{[p,1]}$ where $X\sim F$. The event $A$ such that $H _{X|A}=F^{[p,1]}$ is called a $p$-tail event in \cite{WZ20}, which will be formally defined in Section \ref{sec5}.
%Note that $F_p=F_{[p,1]}$ using this notation.
\begin{proposition}\label{prop:2}
For $p\in (0,1)$,   any set $A\in \mathcal A$ of probability $1-p$ and  $X\sim F$,   $H _ {X|A}\le_{\rm ss} F^{[p,1]}$.
\end{proposition}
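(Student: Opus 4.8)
The plan is to unfold the definition of the strong stochastic order and reduce the claim to an elementary inclusion of events. By definition, $H_{X|A}\le_{\rm ss} F^{[p,1]}$ means that
$$F^{[p,1]}(y)-F^{[p,1]}(x)\ge H_{X|A}(y)-H_{X|A}(x)$$
for all $y\ge x\ge (F^{[p,1]})^{-1}(0)$. So the first step is to identify the left endpoint $(F^{[p,1]})^{-1}(0)$, i.e. the essential infimum of the upper $p$-tail distribution. From $F^{[p,1]}(x)=(F(x)-p)_+/(1-p)$ one reads off that $F^{[p,1]}(x)>0$ precisely when $F(x)>p$, so $(F^{[p,1]})^{-1}(0)=\inf\{t:F(t)>p\}=\VaR^{R}_{p}(F)$. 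Hence it suffices to verify the increment inequality only on the tail region $x\ge \VaR^{R}_{p}(F)$, which is exactly where $F^{[p,1]}$ admits a convenient closed form; below that point $F^{[p,1]}$ vanishes and the $\le_{\rm ss}$ condition imposes nothing.

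Next I would record that for every $x\ge \VaR^{R}_{p}(F)$ one has $F(x)\ge p$, so the positive-part truncation is inactive and
$$F^{[p,1]}(x)=\frac{F(x)-p}{1-p}.$$
Consequently, for $y\ge x\ge \VaR^{R}_{p}(F)$,
$$F^{[p,1]}(y)-F^{[p,1]}(x)=\frac{F(y)-F(x)}{1-p}=\frac{\p(x<X\le y)}{1-p}.$$
On the other side, since $\p(A)=1-p>0$, the conditional distribution is $H_{X|A}(t)=\p(\{X\le t\}\cap A)/(1-p)$, and therefore
$$H_{X|A}(y)-H_{X|A}(x)=\frac{\p(\{x<X\le y\}\cap A)}{1-p}.$$

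Finally, the inequality follows at once from the event inclusion $\{x<X\le y\}\cap A\subseteq\{x<X\le y\}$ together with monotonicity of $\p$, which gives $\p(\{x<X\le y\}\cap A)\le \p(x<X\le y)$ and hence the desired increment bound. The conceptual content is simply that, within the upper tail region, conditioning $X$ on an arbitrary event $A$ of probability $1-p$ can only remove mass from each interval relative to the full tail distribution, which is designed to place all of $X$'s top $1-p$ mass there. The only genuinely delicate point is the boundary case $x=\VaR^{R}_{p}(F)$, where $F$ may have a flat stretch or an atom; there I would invoke right-continuity of $F$ to confirm $F(x)\ge p$ so that the closed-form expression for $F^{[p,1]}$ remains valid at the left endpoint. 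Beyond this boundary bookkeeping, the argument is a short computation with no real obstacle.
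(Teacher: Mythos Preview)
Your proof is correct and follows essentially the same approach as the paper: both compute the increments of $F^{[p,1]}$ and $H_{X|A}$ explicitly over intervals above the essential infimum of $F^{[p,1]}$, and conclude via the trivial inclusion $\{x<X\le y\}\cap A\subseteq\{x<X\le y\}$. You add a bit of extra care in explicitly identifying $(F^{[p,1]})^{-1}(0)=\VaR^{R}_{p}(F)$ and checking the boundary point via right-continuity, which the paper leaves implicit, but the argument is the same.
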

\begin{proof}  For any interval $[x,y]$ with $x\ge \left(F^{[p,1]}\right)^{-1}(0)$, we have
\begin{align*}
F^{[p,1]}(y)-F^{[p,1]}(x) = \frac{(F(y)-p)_+-(F(x)-p)_+ }{1-p}= \frac{ F(y) - F(x)  }{1-p},
\end{align*}
and
\begin{align*}
H _{X|A}(y)-H _{X|A}(x) & =  \p(x< X \le y  \mid    A) =  \frac{\p(\{x< X\le y \}\cap A)}{1-p}
\le  \frac{ F(y) - F(x)  }{1-p}.
\end{align*}
Hence, we have $H _{X|A} \le_{\rm ss} F^{[p,1]}$.
\end{proof}
Combining Theorem \ref{lem:embed} and Proposition \ref{prop:2}, we immediately arrive at the following corollary. This corollary will be used to establish the main result on the worst-case value of tail risk measures with the order constraint in Section \ref{sec5}.
\begin{corollary} \label{coro:1}
Let $A\in \mathcal A$ with probability $1-p$ and $p\in (0,1)$.
Suppose that $F \le_{\rm st} G$, $X\sim F$
and
  $(X_A,Y)\sim D_*^{H _{X|A},G^{[p,1]}}$.
  Then there exists $X'\sim F^{[p,1]}$ such that $X_A\le X'\le Y$ almost surely
  and
    $(X',Y) $ is DL-coupled.
\end{corollary}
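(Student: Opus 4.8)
The plan is to obtain Corollary~\ref{coro:1} as a direct application of the monotone embedding theorem (Theorem~\ref{lem:embed}), with the three distributions $F$, $F'$, $G$ appearing there instantiated as $H_{X|A}$, $F^{[p,1]}$, and $G^{[p,1]}$ respectively, and with the role of $(X,Y)\sim D_*^{F,G}$ played by the given $(X_A,Y)\sim D_*^{H_{X|A},G^{[p,1]}}$. Under this identification the conclusion of Theorem~\ref{lem:embed} produces exactly an $X'\sim F^{[p,1]}$ with $X_A\le X'\le Y$ almost surely and $(X',Y)$ DL-coupled, which is what we want. So the entire task reduces to checking that the ordering hypotheses of Theorem~\ref{lem:embed}, namely $H_{X|A}\le_{\rm ss} F^{[p,1]}\le_{\rm st} G^{[p,1]}$, hold under the assumptions of the corollary.

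First I would dispatch the strong-stochastic-order inequality $H_{X|A}\le_{\rm ss} F^{[p,1]}$. This is precisely Proposition~\ref{prop:2}, applied to the event $A$ of probability $1-p$ and to $X\sim F$, so nothing beyond a citation is needed here; it is exactly where the strength of $\le_{\rm ss}$ (as opposed to mere $\le_{\rm st}$) enters, which is what makes Theorem~\ref{lem:embed} applicable to the lower pair.

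Next I would verify the plain stochastic order $F^{[p,1]}\le_{\rm st} G^{[p,1]}$ between the two upper tail distributions. Recalling that in this paper $\le_{\rm st}$ means domination of cdfs, it suffices to show $F^{[p,1]}(x)\ge G^{[p,1]}(x)$ for every $x$. Writing
\begin{equation*}
F^{[p,1]}(x)=\frac{(F(x)-p)_+}{1-p},\qquad G^{[p,1]}(x)=\frac{(G(x)-p)_+}{1-p},
\end{equation*}
the assumption $F\le_{\rm st} G$ gives $F(x)\ge G(x)$ for all $x$, and since $t\mapsto (t-p)_+$ is nondecreasing we get $(F(x)-p)_+\ge (G(x)-p)_+$, hence $F^{[p,1]}(x)\ge G^{[p,1]}(x)$, i.e. $F^{[p,1]}\le_{\rm st} G^{[p,1]}$. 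This short monotonicity computation is the only genuine calculation in the argument, and it is elementary.

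With both relations in hand I would simply invoke Theorem~\ref{lem:embed} with the identification above and read off the conclusion. I do not expect any real obstacle: the substantive work is already packaged into Theorem~\ref{lem:embed} and Proposition~\ref{prop:2}, and no new coupling needs to be constructed because $D_*^{H_{X|A},G^{[p,1]}}$ is supplied in the hypothesis. The only point requiring care is making sure the hypotheses line up on the correct sides, in particular that the strong order is needed only on the lower pair $(H_{X|A},F^{[p,1]})$ while plain stochastic order suffices between $F^{[p,1]}$ and $G^{[p,1]}$ — and both are furnished, respectively, by Proposition~\ref{prop:2} and the monotonicity step above.
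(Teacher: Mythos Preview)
Your proposal is correct and follows essentially the same approach as the paper's proof: cite Proposition~\ref{prop:2} for $H_{X|A}\le_{\rm ss} F^{[p,1]}$, note that $F\le_{\rm st}G$ implies $F^{[p,1]}\le_{\rm st}G^{[p,1]}$, and then apply Theorem~\ref{lem:embed}. Your explicit verification of the tail-distribution stochastic order via $t\mapsto (t-p)_+$ is a slight elaboration of what the paper simply asserts, but the argument is otherwise identical.
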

\begin{proof}
Note that $F^{[p,1]}\le _{\rm st} G^{[p,1]}$ follows from $F \le _{\rm st} G$,
and $H _{X|A}\le_{\rm ss} F^{[p,1]}$ follows from Proposition \ref{prop:2}.
Applying Theorem \ref{lem:embed} with the condition $H _{X|A} \le_{\rm ss} F^{[p,1]}\le _{\rm st} G^{[p,1]}$ gives the desired result.
\end{proof}
%An even stronger result than Corollary \ref{coro:1} can be obtained.
%\begin{corollary} \label{coro:2}
%Let $A,B\subseteq [0,1]$ be  sets of Lebesgue measure $1-p$ and $p\in (0,1)$.
%Suppose that $F \le_{\rm st} G$
%and
%  $(X,Y)\sim D_*^{F_A,G_B}$.
%  Then there exist  $(X',Y')\sim D_*^{F_p,G_p}$ such that $ X\le X' $ and $Y\le Y'$ almost surely.
%\end{corollary}
%
%\begin{proof}
%We first find $Y' \sim G_p$ with $Y'\ge Y$.
%\end{proof}

\section{Risk measure and probability bounds}\label{sec5}

\subsection{Bounds on tail risk measures}
Evaluating the ``tail risk '', or the behavior of a risk beyond a high level, has become crucial in the regulatory frameworks for banking and insurance. To better understand the tail risk, \cite{LW20} provided an axiomatic framework of risk measures which can quantify the tail risk. Those risk measures are referred to as tail risk measures. This section is dedicated to studying the worst-case value of tail risk measures with the order constraint.

 For $p\in (0,1)$, a risk measure $\rho$ is a \emph{$p$-tail risk measure} if $\rho(F)=\rho(G)$ for all $F,G\in \mathcal{M}$ such that  $F^{[p,1]}= G^{[p,1]}$.  In other words, the value of a $p$-tail risk measure of random variable $X$ is determined by its distribution beyond $F^{-1}(p)$. The class of tail risk measures includes the most important regulatory risk measures VaR and ES, and those popular in the literature, such as RVaR and Gini Shortfall (\cite{furman2017gini}).

For a $p$-tail risk measure $\rho$, there always exists another risk measure $\rho^*$, called the generator, such that $\rho(F) = \rho^*\left(F^{[p,1]}\right)$ where $F\in\mathcal{M}$ and $F^{[p,1]}$ is the upper $p$-tail distribution of $F$. We call $(\rho,\rho^*)$ a \emph{$p$-tail pair of risk measures}.
The class of $\le_{\rm cv}$-consistent generators $\rho^*$ includes, for instance,
\begin{enumerate}[(i)]
\item
 $\rho^*=\essinf$, corresponding to $\rho = \VaR_p^R$;
 \item $\rho^*=\E$, corresponding to $\rho =\ES_p$;
  \item $\rho^*:X\mapsto -\ES_t(-X)$, corresponding $\rho=\RVaR_{p,q}$, where $t=(1-q)/(1-p)$ (see Example 5 of \cite{LW20}).
\end{enumerate}

 Introduced by \cite{WZ20}, a \emph{$p$-tail event} of a random variable $X$ is an event $A\in \mathcal{A}$ with $\mathbb{P}(A)=1-p\in (0,1)$ such that $X(\omega)\geq X(\omega')$ holds for all $\omega\in A$ and $\omega'\in A^{c}$. It is easy to check that, for  $X\sim F$, the upper $p$-tail distribution of $F$ is the same as the conditional distribution of $X$ on the $p$-tail event $A$, i.e., $F^{[p,1]}=H_{X|A}$. Therefore,  we can write the $p$-tail risk measure $\rho(X)=\rho^{*}(X_{A})$ where $X_{A}\sim H_{X|A}$ and $A$ is a $p$-tail event of $X$. Similarly, for risk aggregation $S=X+Y$, we can write the $p$-tail risk measure $\rho(S)=\rho^{*}(X_{B}+Y_{B})$ where $X_{B}\sim H_{X|B}$, $Y_{B}\sim H_{Y|B}$ and $B$ is a $p$-tail event of $S$, but not necessarily a $p$-tail event of either $X$ or $Y$.

To investigate the worst-case value of tail risk measures, we use the notion of $p$-concentration,  characterized by \cite{WZ20}. A random vector $(X,Y)$ is \emph{$p$-concentrated} if $X$ and $Y$ share a common $p$-tail event of probability $1-p$. Intuitively, for $p$ close to 1, $p$-concentrated risks will cause simultaneous large losses if the corresponding $p$-tail event happens.

There is an important connection between $p$-concentration and the worst-case risk aggregation of a $p$-tail risk measure.
In the unconstrained setting (i.e., without the order constraint), if $\rho$ is a monotone $p$-tail  risk measure, the worst-case value of $\rho$  can be attained by $p$-concentrated risks (Theorem 3 of \cite{LW20}). Earlier results of this type for VaR are Theorem 4.6 of \cite{BJW14} and Theorem 4 of \cite{bernard2017value}. Therefore, in the unconstrained setting, it suffices to look at the tail risk of each marginal distribution when we calculate the worst-case value of $\rho$. Moreover, if the generator $\rho^*$  of $\rho$ is $\le_{\rm cv}$-consistent, by \eqref{eq:RW1}, the worst-case value of $\rho$ is attained when the upper tail risks are countermonotonic.

  The following theorem studies the worst-case value of tail risk measures with the order constraint. We show that, if $(\rho,\rho^*)$ is a monotone $p$-tail pair of risk measures and $\rho^*$ is $\le_{\rm cv}$-consistent, the worst-case value of $\rho$ with the order constraint can also be attained by $p$-concentrated risks, and it is attained when the upper tail risks are DL-coupled. This result can be seen as  parallel to \citet[Theorem 3]{LW20}, which does not have the order constraint. However, the proof is quite different, and the strong stochastic order in Section \ref{sec4} through Corollary \ref{coro:1} is crucial for this result.

\begin{theorem}\label{thm2}
Suppose that $F\le_{\rm st} G$, $p\in (0,1)$, $(\rho,\rho^*)$ is a $p$-tail pair of risk measure,
and $\rho^*$ is monotone and $\le_{\rm cv}$-consistent.
We have
\begin{equation}
\label{eq:ineq2}
\overline{\rho}(\mathcal F^o_2(F,G))=\overline{\rho^*}\left(\mathcal F^o_2 \left(F^{[p,1]},G^{[p,1]} \right)\right) = \rho^*(X+Y),
\end{equation}
where $(X,Y)\sim D_*^{F^{[p,1]},G^{[p,1]}}$.
\end{theorem}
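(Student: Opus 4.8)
The plan is to prove the two stated equalities separately, writing $(X^*,Y^*)\sim D_*^{F^{[p,1]},G^{[p,1]}}$ throughout. The second equality $\overline{\rho^*}(\mathcal F^o_2(F^{[p,1]},G^{[p,1]}))=\rho^*(X^*+Y^*)$ comes essentially for free: since $F\le_{\rm st}G$ forces $F^{[p,1]}\le_{\rm st}G^{[p,1]}$, the set $\mathcal F^o_2(F^{[p,1]},G^{[p,1]})$ is admissible, and because $\rho^*$ is $\le_{\rm cv}$-consistent, Corollary \ref{rm:cv} says its worst-case value over this set is attained by the DL coupling, i.e.\ it equals $\rho^*(X^*+Y^*)$. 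It then remains to prove the first equality, which I would split into a lower bound $\overline{\rho}(\mathcal F^o_2(F,G))\ge \rho^*(X^*+Y^*)$ and an upper bound $\overline{\rho}(\mathcal F^o_2(F,G))\le \overline{\rho^*}(\mathcal F^o_2(F^{[p,1]},G^{[p,1]}))$; since the middle and right quantities are equal, these two inequalities pinch $\overline{\rho}(\mathcal F^o_2(F,G))$ to the common value.

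For the lower bound I would exhibit a single $p$-concentrated feasible pair. On the atomless space, split $\Omega=B\cup B^c$ with $\p(B)=1-p$, place a copy of $D_*^{F^{[p,1]},G^{[p,1]}}$ on $B$ and a comonotonic copy of $(F^{[0,p]},G^{[0,p]})$ on $B^c$. The mixture recovers the marginals $F$ and $G$; the order constraint holds on each piece (comonotonicity respects $F^{[0,p]}\le_{\rm st}G^{[0,p]}$, and the copy on $B$ is feasible by construction); and since the lower-tail values sit coordinatewise below the upper-tail values, one checks $\essinf(S\mid B)\ge\esssup(S\mid B^c)$, so $B$ is a $p$-tail event of $S=X+Y$. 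Hence $\rho(X+Y)=\rho^*(X^*+Y^*)$, which gives the lower bound.

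The upper bound is the substantive part and is where the machinery of Section \ref{sec4} enters. Fix any $(X,Y)\in\mathcal F^o_2(F,G)$ and a $p$-tail event $B$ of $S=X+Y$, so that $\rho(X+Y)=\rho^*(X_B+Y_B)$ with $X_B\sim H_{X|B}$, $Y_B\sim H_{Y|B}$ and $X_B\le Y_B$. I would then run a chain of three steps transporting $(X_B,Y_B)$ to the DL-coupled tail pair while never decreasing $\rho^*$: first, grow $Y_B$ upward to some $Y''\sim G^{[p,1]}$ with $X_B\le Y_B\le Y''$, which is possible since $H_{Y|B}\le_{\rm st}G^{[p,1]}$ by Proposition \ref{prop:2}, using monotonicity of $\rho^*$; second, replace $(X_B,Y'')$ by the DL coupling $(\check X,\check Y)\sim D_*^{H_{X|B},G^{[p,1]}}$, which does not decrease $\rho^*$ by Lemma \ref{lem:opt} together with $\le_{\rm cv}$-consistency; third, apply Corollary \ref{coro:1} to this DL-coupled pair to obtain $X'\sim F^{[p,1]}$ with $\check X\le X'\le\check Y$ and $(X',\check Y)$ DL-coupled, invoking monotonicity once more. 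Since $(X',\check Y)\sim D_*^{F^{[p,1]},G^{[p,1]}}$, the end of the chain equals $\rho^*(X^*+Y^*)$; taking the supremum over feasible $(X,Y)$ finishes the argument.

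The delicate point — and the reason the counterexample preceding Theorem \ref{lem:embed} is relevant — is the third step: one cannot embed $X_B$ directly into an $F^{[p,1]}$-variable sandwiched below $Y_B$, because $F^{[p,1]}\le_{\rm st}H_{Y|B}$ may fail. This is exactly why I first inflate the upper coordinate all the way to $G^{[p,1]}$ so that Corollary \ref{coro:1} becomes applicable, and why the strong stochastic order is indispensable: Corollary \ref{coro:1} succeeds only because Proposition \ref{prop:2} supplies $H_{X|B}\le_{\rm ss}F^{[p,1]}$, a relation strictly stronger than $\le_{\rm st}$. I expect the main obstacle to be precisely the bookkeeping of this chain — ensuring each coupling can be realized on the atomless space while preserving the order constraint — whereas the $p$-tail-event verification in the lower-bound construction should be routine.
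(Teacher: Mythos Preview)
Your proposal is correct and mirrors the paper's proof almost step for step: the paper likewise obtains the final equality from Lemma \ref{lem:opt}, gets the ``$\ge$'' direction by a $p$-concentration construction, and proves ``$\le$'' via exactly your three-step chain (lift $Y_A$ to $G^{[p,1]}$ by monotonicity, replace by the DL coupling using $\le_{\rm cv}$-consistency, then embed $X'\sim F^{[p,1]}$ via Corollary \ref{coro:1}). The only cosmetic difference is that you establish the second equality first and then need only the single witness $(X^*,Y^*)$ for the lower bound, whereas the paper phrases the lower bound for arbitrary $(X,Y)\in\mathcal F^o_2(F^{[p,1]},G^{[p,1]})$; your added verification that $B$ is a $p$-tail event of $S$ is a detail the paper leaves implicit.
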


\begin{proof}
%For any random variable $X$, we will use $X_p$ for a random variable following
%the distribution of $F_p$, where $F$ is the distribution of $X$.
First, for any
$X\sim F^{[p,1]}$ and $Y\sim G^{[p,1]}$,
we can always construct $Z\sim F$ and $W\sim G$ such that conditional on a $p$-tail event of $Z+W$,  $ Z+W$ has the same law as $X+Y$.
This structure can be obtained by using a copula satisfying $p$-concentration.
Hence, we have the ``$\ge$" direction of  the following equality
\begin{equation}
\label{eq:ineq}
\overline{\rho}(\mathcal F^o_2(F,G))=\overline{\rho^*}(\mathcal F^o_2(F^{[p,1]},G^{[p,1]})).\end{equation}
Below, we will show  the ``$\le$" direction of \eqref{eq:ineq}.
We break the proof into several steps.
\begin{enumerate}
\item For any $X\sim F $ and $Y\sim G$ such that $X\le Y$ almost surely, let $A$ be a $p$-tail event of $X+Y$ in the sense of \cite{WZ20}.
Hence,
$\rho(X+Y) = \rho^*(X_A+Y_A)$ for some $X_A\sim H _{X|A}$ and $Y_A\sim H _{Y|A}$.
Note that here we only need to specify the distribution of $(X_A,Y_A)$, which is the conditional distribution of $(X,Y)$ on $A$.
% Let $F_B$ be the conditional distribution of $X$ under $A$
%and  $G_C$ be the conditional distribution of $Y$ under $A$.
%Here, the sets $B$ and $C$ are determined by the ranges of  $U_X$ and $U_Y$ under the event $A$, where $U_X$ and $U_X$ are the uniform transforms of $X$ and $Y$ (see Section 2 of \cite{LW20}), respectively.
\item
By Propositions \ref{prop:1} and \ref{prop:2}, we have $H _{Y|A}\le _{\rm st} G^{[p,1]}$.
Take $Y' \sim G^{[p,1]}$ satisfying $Y'\ge Y_A$.
The existence of $Y'$ is guaranteed by, e.g., Theorem 1.A.1 of \cite{SS07}.
  By monotonicity of $\rho^*$, we have $\rho^*(X_A+Y_A) \le \rho^*(X_A+Y')$.
\item Take $\tilde X_{A}\sim H _{X|A}$ and $\tilde Y\sim G^{[p,1]}$ such that $(\tilde X_{A},\tilde Y) $ is DL-coupled.
By $\le_{\rm cv}$-consistency  of $\rho^*$ and  Lemma \ref{lem:opt},
 we have $\rho^*(X_A +Y') \le \rho^*(\tilde X_{A}+\tilde Y)$.
\item Using Corollary \ref{coro:1}, there exists $\tilde X\sim F^{[p,1]}$
 such that
 $\tilde X_A\le \tilde X \le \tilde Y$ almost surely.
 By monotonicity of $\rho^*$, we have $\rho^*(\tilde X_A+\tilde Y) \le \rho^*(\tilde X+\tilde Y)$.

\end{enumerate}
We established the chain of inequalities
$$
\rho(X+Y) = \rho^*(X_A+Y_A) \le  \rho^*(X_A+Y')  \le \rho^*(\tilde X_A+\tilde Y) \le \rho^*(\tilde X+\tilde Y).
$$
where $\tilde X\sim F^{[p,1]}$ and $\tilde Y\sim G^{[p,1]}$.
Therefore, we obtained the ``$\le$" direction of the equality in \eqref{eq:ineq}.
The last equality in \eqref{eq:ineq2} is directly obtained from Lemma \ref{lem:opt}.
\end{proof}
\begin{remark}
For $F$, $G\in \mathcal M$, we look at cases where $\rho$ is one of VaR$^{R}$, ES and RVaR.
\begin{enumerate}[(i)]
\item
For $p\in(0,1)$, we have $\overline{\VaR}_p^R(\mathcal F^o_2(F,G))=\essinf(X+Y)$ where $(X,Y)\sim D_*^{F^{[p,1]},G^{[p,1]}}$.
\item
For $p\in (0,1)$, we have $\overline{\ES}_{p}(\mathcal F^o_2(F,G))=\E\left[F^{[p,1]}\right]+\E\left[G^{[p,1]}\right]=\ES_{p}(F)+\ES_{p}(G)$, which can also be obtained from comonotonic-additivity and subadditivity of ES. Hence the order constraint does not improve the worst-case value of ES. Indeed, the worst-case value of ES in unconstrained case is attained if and only if the two risks are $p$-concentrated (Theorem 5 of \cite{WZ20}).
\item
For $0\leq p<q<1$, we have $\overline{\RVaR}_{p,q}(\mathcal F^o_2(F,G))=-\ES_{t}(-X-Y),$ where $(X,Y)\sim D_*^{F^{[p,1]},G^{[p,1]}}$ and $t=(1-q)/(1-p)$.
\end{enumerate}
\end{remark}
%{\color{blue}
%\begin{remark}
%For a $p$-tail pair of risk measure $(\rho,\rho^*)$ such that $\rho^*$ is monotone and $\le_{\rm cv}$-consistent, the worst-case value of $\rho$ given by Theorem \ref{thm2} cannot improve the value of $\rho$ obtained by assuming that the risks are comonotonic. For $(X,Y)\in \mathcal F^o_2(F,G)$ such that $(X,Y)$ is comonotonic, $\rho(X+Y)= \rho^{*}(X^{c}+Y^{c})$ where $(X^c,Y^c)\in \mathcal F^o_2\(F^{[p,1]},G^{[p,1]}\)$ and $(X^c,Y^c)$ is also comonotonic. By Corollary \ref{rm:cv} and Theorem \ref{thm2}, we have $\rho(X+Y)\le\overline{\rho}(\mathcal F^o_2(F,G))$.
%\end{remark}
%}

Similarly, we can derive the best-case value of risk measures. For instance,
$$\underline{\RVaR}_{p,q}(\mathcal F^o_2(F,G))=-\overline{\RVaR}_{1-q,1-p}(\mathcal F^o_2(\hat G,\hat F)))=\ES_{p/q}(X+Y),$$ where $\hat G$ and $\hat F$ are the distributions of $-Y$ and $-X$, respectively, and $(X,Y)\sim D_*^{F^{[0,q]},G^{[0,q]}}$.  In Section \ref{sec5.2}, we derive analytical results for the best-case and worst-case values of VaR.

 In the following example, we calculate the worst-case value of RVaR for two uniformly distributed risks.

\begin{example}\label{ex:uniform}
For fixed $p\in(0,1)$ and distributions $F,G$ such that $F\le_{\rm st}G$, suppose that the upper $p$-tail distributions are two uniform distributions $F^{[p,1]}(x)=x$ for $x\in[0,1]$ and $G^{[p,1]}(y)=y/b$ for $y\in[0,b]$. It is easy to check that $F^{[p,1]}\le_{\rm st}G^{[p,1]}$ if and only if $b\ge 1$. We assume that $1<b<2$. Let $(X,Y)\sim D_*^{F^{[p,1]},G^{[p,1]}}$. The directional optimal transport between singular parts of $F^{[p,1]}$ and $G^{[p,1]}$ is
  $$T^{F^{[p,1]},G^{[p,1]}}(x)=\inf\left\{z \ge x:1-\frac{z}{b}<x-\frac{x}{b}\right\}=b-(b-1)x,~~~x\in[0,1].$$
Then for $c\in [0,b)$ we have $\p(X+Y\le c)=\left(\mu_{F^{[p,1]}}\wedge \mu_{G^{[p,1]}}\right)([\infty,c/2])=c/2b$. For $c\in [b,2]$,
$$\p(X+Y\le c)=(\mu_{F^{[p,1]}}\wedge \mu_{G^{[p,1]}})([\infty,c/2])+\mu_{F^{[p,1]}}'([0,(c-b)/(2-b)])=\frac{c}{2(2-b)}-\frac{b-1}{2-b}.$$
Therefore, $\VaR_{\alpha}^{R}(X+Y)=2b\alpha$ for $\alpha\in(0,1/2]$ and $\VaR_{\alpha}^{R}(X+Y)=(4-2b)\alpha+2b-2$ for $\alpha\in[1/2,1].$ By Theroem \ref{thm2}, we derive the worst-case value of RVaR
$$\overline{\RVaR}_{p,q}(\mathcal F^o_2(F,G))=
\begin{cases}
ba,~~~~q\in\left(p,\frac{1+p}{2}\right];\\
\frac{b}{4a}-\frac{1}{4a}(2a-1)(2ba-3b-4a+2),~~~~q\in\left(\frac{1+p}{2},1\right),
\end{cases}$$
where $a=1-(1-q)/(1-p)$.
%\begin{enumerate}[(i)]
 % \item
 % If $b<2$, we have $\p(X+Y\le c)=(u\wedge v)([\infty,c/2])=c/2b$ for $c\in [0,b)$. For $c\in [b,2]$, $$\p(X+Y=2a)=(u\wedge v)([\infty,c/2])+u'([0,(c-b)/(2-b)])=\frac{c}{2(2-b)}-\frac{b-1}{2-b}.$$
 % Therefore, $\VaR_{\alpha}=2b\alpha$ for $\alpha\in(0,1/2]$ and $\VaR_{\alpha}=(4-2b)\alpha+2b-2$ for $\alpha\in[1/2,1].$
 % \item
 % If $b=2$, we have $\p(X+Y\le c)=(u\wedge v)([\infty,c/2])=c/4$ for $c\in [0,2)$ and $\p(X+Y=2)=1/2$.
 % \item
 % If $b>2$,  we have $\p(X+Y\le c)=(u\wedge v)([\infty,c/2])=c/2b$ for $c\in [0,2)$ and $\p(X+Y\le c)=1/2+(b-1)(c-2)/b(b-2)$ for $c\in [2,b]$.

%\end{enumerate}
\end{example}

\subsection{VaR bounds}\label{sec5.2}
The popular risk measure VaR is the most important example of a non-convex risk measure, and it is neither $\le_{\rm cx}$- nor $\le_{\rm cv}$-consistent.
In this section, we derive analytical solutions for VaR bounds with the order constraint if marginal distributions are continuous. For non-continuous marginal distributions, an algorithm is available in Section \ref{sec6} to approximate the bounds.

\begin{proposition}\label{cor:var}
For continuous distributions $F$ and $G$ such that $F\le_{\rm st} G$ and $p\in (0,1)$, we have
$$\overline{\VaR}_p^R(\mathcal F^o_2(F,G))=\min\left\{\inf_{x\in\left[F^{-1}(p+),G^{-1}(p+)\right]}\left\{T^{F^{[p,1]},G^{[p,1]}}(x)+x\right\},2G^{-1}(p+)\right\},$$

and
$$\underline{\VaR}_p^L(\mathcal F^o_2(F,G))=  \max\left\{\sup_{x\in \left[F^{-1}(p), G^{-1}(p)\right]} \left\{\hat{T}^{F^{[0,p]},G^{[0,p]}}(x)+x\right\}, 2 F^{-1}(p)\right\},$$
where $\hat T^{F^{[0,p]},G^{[0,p]}}(x)=\sup\left\{t\le x: F^{[0,p]}(t)- G^{[0,p]}(t)< F^{[0,p]}(x)-G^{[0,p]}(x)\right\}$.
\end{proposition}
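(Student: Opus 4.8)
The plan is to reduce both bounds to the essential-infimum formula of Theorem \ref{essinf} through the tail-risk-measure machinery of Theorem \ref{thm2}, and then to obtain the best-case left-VaR bound from the worst-case right-VaR bound by a reflection argument.

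For the worst-case right VaR, I would first record that $\VaR^R_p$ is a $p$-tail risk measure whose generator is $\essinf=\VaR^R_0$, which is monotone and $\le_{\rm cv}$-consistent, as noted in the list preceding Theorem \ref{thm2}. Hence Theorem \ref{thm2} (in particular Remark (i) following it) applies and gives $\overline{\VaR}^R_p(\mathcal F^o_2(F,G))=\essinf(X+Y)$ with $(X,Y)\sim D_*^{F^{[p,1]},G^{[p,1]}}$. Since $F$ and $G$ are continuous, the tail distributions $F^{[p,1]}$ and $G^{[p,1]}$ are continuous as well and satisfy $F^{[p,1]}\le_{\rm st}G^{[p,1]}$, so Theorem \ref{essinf} applies to them verbatim. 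The only remaining task is to identify the endpoints in \eqref{eq:essinf}: since $F^{[p,1]}(x)>0$ iff $F(x)>p$, the essential infimum of $F^{[p,1]}$ equals $\inf\{x:F(x)>p\}=F^{-1}(p+)$, and likewise for $G$. Substituting these identifications into \eqref{eq:essinf} yields the first displayed formula.

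For the best-case left VaR, I would use the reflection map $\hat F(t):=1-F(-t)$. A direct computation gives $\VaR^L_p(F)=-\VaR^R_{1-p}(\hat F)$, and since $(X,Y)\in\mathcal F^o_2(F,G)$ iff $(-Y,-X)\in\mathcal F^o_2(\hat G,\hat F)$ (note that $F\le_{\rm st}G$ implies $\hat G\le_{\rm st}\hat F$), one has $\underline{\VaR}^L_p(\mathcal F^o_2(F,G))=-\overline{\VaR}^R_{1-p}(\mathcal F^o_2(\hat G,\hat F))$. Applying the worst-case formula just proved, with level $1-p$ and marginals $\hat G\le_{\rm st}\hat F$, and then negating turns the $\min$ into a $\max$ and the $\inf$ into a $\sup$. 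To match the claimed expression I would establish the two reflection identities $(\hat G)^{[1-p,1]}=\widehat{G^{[0,p]}}$ and $(\hat F)^{[1-p,1]}=\widehat{F^{[0,p]}}$, each being the elementary computation that both sides equal $(p-H(-t))_+/p$, together with the relation $-T^{\widehat{G^{[0,p]}},\widehat{F^{[0,p]}}}(x)=\hat T^{F^{[0,p]},G^{[0,p]}}(-x)$ of the type already used in the proof of Theorem \ref{essinf}. Combining these with the endpoint identities $\hat F^{-1}((1-p)+)=-F^{-1}(p)$ and $\hat G^{-1}((1-p)+)=-G^{-1}(p)$ and the substitution $x\mapsto -x$ transforms the negated worst-case bound into the stated best-case expression.

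The worst-case direction is essentially immediate once the endpoint identification is made. The delicate part will be the bookkeeping in the best-case reflection argument: keeping track of left versus right quantiles, of the correspondence between the upper $(1-p)$-tails of the reflected distributions and the lower $p$-tails of the originals, and of the sign flips that send $\inf$ to $\sup$ and $\min$ to $\max$, so that every index and every tail range lines up exactly with the claimed formula.
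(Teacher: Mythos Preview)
Your proposal is correct and follows essentially the same route as the paper: apply Theorem~\ref{thm2} with the pair $(\VaR^R_p,\essinf)$ to reduce to the DL-coupled tail pair, invoke Theorem~\ref{essinf} on $F^{[p,1]},G^{[p,1]}$, and obtain the best-case $\VaR^L_p$ formula by the reflection $(X,Y)\mapsto(-Y,-X)$. The only cosmetic difference is that the paper, after writing $\underline{\VaR}^L_p(\mathcal F^o_2(F,G))=-\overline{\VaR}^R_{1-p}(\mathcal F^o_2(\hat G,\hat F))$, recognizes this directly as $\underline{\esssup}(\mathcal F^o_2(F^{[0,p]},G^{[0,p]}))$ and quotes the $\esssup$ half of Theorem~\ref{essinf}, whereas you re-apply the first displayed formula of the proposition and unwind the reflection by hand; the two computations are equivalent.
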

\begin{proof}
For $p\in (0,1)$, as $\left(\VaR_{p}^{R},\essinf\right)$ is a $p$-tail pair of risk measures and $\essinf$ is $\le_{\rm cv}$-consistent, by Theorem \ref{thm2},
\begin{equation*}
\overline{\VaR}^{R}_p(\mathcal F^o_2(F,G))=\overline{\essinf}(\mathcal F^o_2(F^{[p,1]},G^{[p,1]})) = \essinf(X+Y),
\end{equation*}
where $(X,Y)\sim D_*^{F^{[p,1]},G^{[p,1]}}$. By Theorem \ref{essinf}, we obtain the first result.
For the second result, let $X' \sim F$ and $Y'\sim G$.  Denote by $\hat F$ and $\hat G$ the distributions of $-X'$ and $-Y'$. We have
$$\underline{\VaR}_p^L(\mathcal F^o_2(F,G))=-\overline {\VaR}_{1-p}^R(\mathcal F^o_2(\hat G,\hat F))=\underline \esssup\left( \mathcal F^o_2\left(F^{[0,p]},G^{[0,p]}\right)\right).$$
Applying Theorem \ref{essinf}, we get the desired result.
\end{proof}
\begin{remark}
For $F,G\in \mathcal{M}$ and $p\in(0,1)$, the worst-case value of VaR$^{R}_{p}$ and the best-case value of VaR$^{L}_{p}$ without the order constraint are attained by letting the upper tail risks and lower tail risks be countermonotonic, respectively, i.e.,
$$\sup\left\{\VaR^{R}_{p}(X+Y): X\sim F,~Y\sim G\right\}=\inf_{x\in[0,1-p]}\left\{F^{-1}(p+x)+G^{-1}(1-x)\right\},$$
$$\inf\left\{\VaR^{L}_{p}(X+Y): X\sim F,~Y\sim G\right\}= \sup_{x\in[0,p]}\left\{F^{-1}(x)+G^{-1}(p-x)\right\}. $$
See \cite{M81} and \cite{R82}.
\end{remark}

\begin{example}[Pareto risks: VaR bounds]\label{ex:varbounds}
Following the marginal assumptions on $F$ and $G$ in Example \ref{ex:pareto}, we derive $\overline{\VaR}^{R}_{p}(\mathcal F^o_2(F,G))$ and $\underline{\VaR}_p^L(\mathcal F_2^o(F,G))$ by Proposition \ref{cor:var}. For $p \in (0,1)$, we have
$$F^{[p,1]}(x)=\(1-\frac{1}{x(1-p)}\)\id_{\{x\ge 1/(1-p)\}}~~~~ \text{and} ~~~~ G^{[p,1]}(x)=\(1-\frac{2}{x(1-p)}\)\id_{\{x\ge 2/(1-p)\}}.$$
Thus,
$$T^{F^{[p,1]},G^{[p,1]}}(x)=\frac{x}{x(1-p)-1}, ~~~~ x\in \left(\frac{1}{1-p},\frac{2}{1-p}\right].$$
And $T^{F^{[p,1]},G^{[p,1]}}\left(1/(1-p)\right)=\inf\{\emptyset\}=\infty$.  Therefore,
$$\overline{\VaR}^{R}_{p}(\mathcal F^o_2(F,G))=\min\left\{\inf_{x\in [1/(1-p),2/(1-p)]}\left\{\frac{x}{x(1-p)-1}+x\right\},\frac{4}{(1-p)}\right\}=\frac{4}{1-p}.$$
Similarily, we have $\underline{\VaR}_{p}^{L}(\mathcal F^o_2(F,G))=
1+2/(1-p)$. Those bounds on VaR will be used to calculate probability bounds of $X+Y$ in Example \ref{ex:prob}, where $X\sim F$ and $Y\sim G$.
%Furthermore, in Example \ref{ex:pareto}, we know that if $X\sim F$, $Y\sim G$ and $(X,Y)$ is DL-coupled
%$$\p\left(X+Y \le \frac{4}{1-p}\right)=\frac{\sqrt{p}-p+2}{2}\ge p.$$
%Therefore, $\VaR^{R}_{p}(X+Y)\le \overline{\VaR}^{R}(\mathcal F^o_2(F,G))$.
\end{example}

In the unconstrained problem, \cite{BJW14} showed that the worst-case value of  VaR$^{R}_{p}$ is a continuous function of $p\in(0,1)$ if the marginal distributions are strictly increasing. This continuity result is used to confirm that there is no need to distinguish between VaR$^{R}$ and VaR$^{L}$ when we calculate their worst-case values (best-case values). We will see later that the above statement is still true if the order constraint is further imposed.
The continuity of the worst-case value of VaR$^{R}$ with  the order constraint is established in
  Lemma \ref{lem:conto}. The proof of Lemma \ref{lem:conto} is surprisingly complicated, very different from the case treated by \cite{BJW14}, and it is put in the Appendix.

\begin{lemma}\label{lem:conto}
  For strictly increasing continuous  distribution functions $F$ and $G$ such that $F\le_{\rm{st}}G$, the function  $p \mapsto \overline{\VaR}_p^R(\mathcal F^o_2(F,G))$ is  continuous  on  $ (0,1)$.
\end{lemma}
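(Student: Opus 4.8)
The plan is to argue directly from the closed-form expression in Proposition \ref{cor:var}. Because $F$ and $G$ are strictly increasing and continuous, the inverses $F^{-1}$ and $G^{-1}$ are continuous and strictly increasing, and $F^{-1}(p+)=F^{-1}(p)$, $G^{-1}(p+)=G^{-1}(p)$. Hence
$$
\overline{\VaR}_p^R(\mathcal F^o_2(F,G))=\min\{h(p),\,2G^{-1}(p)\},\qquad
h(p):=\inf_{x\in[F^{-1}(p),G^{-1}(p)]}\Big\{T^{F^{[p,1]},G^{[p,1]}}(x)+x\Big\}.
$$
The branch $2G^{-1}(p)$ is continuous on $(0,1)$, so the task is to understand $h$ and, crucially, its interaction with the minimum: $h$ itself need not be continuous, and part of the work is to show that wherever $h$ is discontinuous the competing value $2G^{-1}(p)$ lies below it, so that the outer minimum selects the continuous branch.

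My first reduction would isolate the $p$-dependence of the transport. Set $\Delta:=F-G\ge 0$, a function that does not involve $p$. A direct computation of the upper-tail difference gives, for every $z$, $F^{[p,1]}(z)-G^{[p,1]}(z)=\tfrac{1}{1-p}\psi_p(z)$, where $\psi_p(z)=F(z)-p$ on $[F^{-1}(p),G^{-1}(p)]$, $\psi_p(z)=\Delta(z)$ on $(G^{-1}(p),\infty)$, and $\psi_p(z)=0$ below $F^{-1}(p)$. Since the positive factor $1/(1-p)$ cancels in the strict inequality defining the transport, for $x\in(F^{-1}(p),G^{-1}(p)]$ one obtains
$$
T^{F^{[p,1]},G^{[p,1]}}(x)=\inf\big\{z>G^{-1}(p):\Delta(z)<F(x)-p\big\}.
$$
Writing $a=F(x)-p\in[0,\Delta(G^{-1}(p))]$ and $x=F^{-1}(p+a)$, this recasts $h$ as $h(p)=\inf_a\{F^{-1}(p+a)+\tau(G^{-1}(p),a)\}$, where $\tau(c,a):=\inf\{z>c:\Delta(z)<a\}$ is a first-passage-below-level functional of the fixed function $\Delta$ that enters only through the continuously moving cutoff $c=G^{-1}(p)$ and the continuously moving upper limit $\Delta(G^{-1}(p))$ of the range of $a$.

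Since $p\mapsto\overline{\VaR}_p^R(\mathcal F^o_2(F,G))$ is nondecreasing---being a supremum over couplings of the nondecreasing maps $p\mapsto\VaR_p^R(X+Y)$---continuity is equivalent to the absence of jumps, and I would prove it by establishing upper and lower semicontinuity of $R(p):=\min\{h(p),2G^{-1}(p)\}$ (which, under monotonicity, amount to right- and left-continuity respectively). Upper semicontinuity is the easier direction: fixing a near-optimal level $a^\ast$ at $p$, the map $q\mapsto F^{-1}(q+a^\ast)$ is continuous, the cutoff $G^{-1}(q)$ moves continuously, and---away from plateaus of $\Delta$---$\tau(G^{-1}(q),a^\ast)$ varies continuously, so $R(q)\le F^{-1}(q+a^\ast)+\tau(G^{-1}(q),a^\ast)$ yields $\limsup_{q\to p}R(q)\le R(p)$ once the admissibility $a^\ast\le\Delta(G^{-1}(q))$ is checked (the boundary case $a^\ast=\Delta(G^{-1}(p))$ being exactly where the $2G^{-1}(p)$ branch must be brought in). For lower semicontinuity I would take $q_n\to p$ with near-optimizers $a_n$, extract $a_n\to a_0$ by compactness of the bounded domains, and pass to the limit.

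I expect the lower-semicontinuity step to be the main obstacle, and the reason is structural: $\tau(c,a)$ is not continuous in $a$. On any interval where $\Delta=F-G$ is constant (equivalently, at points that are ``neither of strict increase nor of strict decrease'' in the sense of \cite{NW20}), the set $\{z:\Delta(z)<a\}$ gains or loses an entire interval as $a$ crosses the plateau level, so $\tau(\cdot,a)$ can jump downward, and $\tau(G^{-1}(q_n),a_n)$ need not converge to $\tau(G^{-1}(p),a_0)$. The resolution I anticipate is a careful case analysis at each point where the integrand would jump: either the outer infimum over $a$ is attained strictly away from the offending plateau level, so $h$ does not feel the jump, or the jump forces $h(p)\ge 2G^{-1}(p)$, in which case the outer minimum returns the continuous value $2G^{-1}(p)$. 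Combining this with the continuity of the moving optimization domain $[F^{-1}(p),G^{-1}(p)]$ is the delicate part that makes the argument substantially longer than the unconstrained continuity result of \cite{BJW14}.
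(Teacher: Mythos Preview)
Your approach is viable and shares with the paper the central reparametrization: writing the transport as a first-passage functional of the fixed function $\Delta=F-G$, with the level $a=F(x)-p$ and a moving cutoff $G^{-1}(p)$. The paper exploits exactly this identity in its left-continuity step, showing $T^{F^{[p-\epsilon,1]},G^{[p-\epsilon,1]}}(x)=T^{F^{[p,1]},G^{[p,1]}}(y)$ with $y=F^{-1}(F(x)+\epsilon)$, which is precisely your observation that the level is $p$-independent once expressed through $a$. Where the two routes diverge is in right-continuity: the paper avoids the closed form entirely there and instead invokes weak continuity of the DL coupling $D_*^{F^{[p+\epsilon,1]},G^{[p+\epsilon,1]}}\to D_*^{F^{[p,1]},G^{[p,1]}}$ (Corollary~2.5 of \cite{NW20}) together with upper semicontinuity of $\essinf$, which is short and clean. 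For left-continuity the paper is quantitative rather than sequential, building explicit moduli $\delta(\epsilon),h(\epsilon),d(\epsilon)$ and a hard case split near $G^{-1}(p)$, whereas you argue by compactness of near-optimizers.

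One point to sharpen: your proposed dichotomy---``either the infimum is attained away from the plateau level, or the plateau forces $h(p)\ge 2G^{-1}(p)$''---is not the mechanism that actually closes the gap. If $a_n\to a_0<\Delta(G^{-1}(p))$, the cutoff issue disappears because $\Delta>a_0$ on a neighborhood of $G^{-1}(p)$, so $\tau(G^{-1}(q_n),a_n)=\tau(G^{-1}(p),a_n)$ for large $n$; and the possible right-discontinuity of $a\mapsto\tau(G^{-1}(p),a)$ at $a_0$ is harmless \emph{because the infimum $h(p)$ itself absorbs the jump}: taking $a\downarrow a_0$ inside $h(p)$ gives $h(p)\le F^{-1}(p+a_0)+\lim_{a\downarrow a_0}\tau(G^{-1}(p),a)$, and this right-limit is exactly the lower bound you obtain for $\liminf R(q_n)$. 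The case $a_0=\Delta(G^{-1}(p))$ is then handled by the trivial bound $\tau\ge G^{-1}(q_n)$, yielding $\liminf R(q_n)\ge 2G^{-1}(p)\ge R(p)$. With this correction your sequential argument goes through; the paper's explicit-modulus version encodes the same idea via the threshold $x_\epsilon$ and the function $h(\epsilon)$.
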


Using Lemma \ref{lem:conto}, we obtain that the worst-case values (best-case values) of VaR$^{R}$ and VaR$^{L}$ with the order constraint are equivalent for strictly increasing continuous distributions.
\begin{proposition}\label{VaR:cont}
Suppose that $F$ and $G$ are strictly increasing continuous distribution functions such that $F \le_{\rm{st}}G$. For $p\in (0,1)$, we have
$$\overline{\VaR}_p^L(\mathcal F^o_2(F,G))=\overline{\VaR}_p^R(\mathcal F^o_2(F,G))~~~~\text{and}~~~~\underline{\VaR}_p^L(\mathcal F^o_2(F,G))=\underline{\VaR}_p^R(\mathcal F^o_2(F,G)).$$
\end{proposition}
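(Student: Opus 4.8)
The plan is to derive both equalities from the continuity statement in Lemma \ref{lem:conto}, treating the worst-case pair first and then reducing the best-case pair to it by reflection. Throughout I let $H$ denote the distribution of $X+Y$ as $(X,Y)$ ranges over $\mathcal F^o_2(F,G)$, and I use only the elementary quantile facts that, for $q<p$,
$$\VaR^R_q(H)\le \VaR^L_p(H)\le \VaR^R_p(H).$$
These follow directly from the definitions: one has the nested inclusions $\{t:H(t)>p\}\subseteq\{t:H(t)\ge p\}\subseteq\{t:H(t)>q\}$, and infima are antitone under set inclusion. Taking the supremum over $(X,Y)\in\mathcal F^o_2(F,G)$ in these pointwise inequalities yields
$$\overline{\VaR}^R_q(\mathcal F^o_2(F,G))\le \overline{\VaR}^L_p(\mathcal F^o_2(F,G))\le \overline{\VaR}^R_p(\mathcal F^o_2(F,G)),\qquad q<p.$$

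First I would establish the worst-case equality. Letting $q\uparrow p$ in the displayed chain and invoking the continuity of $p\mapsto\overline{\VaR}^R_p(\mathcal F^o_2(F,G))$ on $(0,1)$ from Lemma \ref{lem:conto}, the leftmost term converges to $\overline{\VaR}^R_p(\mathcal F^o_2(F,G))$, which equals the rightmost term. The middle term is thereby squeezed, giving $\overline{\VaR}^L_p(\mathcal F^o_2(F,G))=\overline{\VaR}^R_p(\mathcal F^o_2(F,G))$. This is the only point where the continuity lemma, and hence the hypothesis that $F$ and $G$ are strictly increasing and continuous, is genuinely used.

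Next I would obtain the best-case equality from the worst-case one via the reflection already employed in the proof of Proposition \ref{cor:var}. Writing $\hat F(t)=1-F(-t)$ and $\hat G(t)=1-G(-t)$, one checks that $\hat G\le_{\rm st}\hat F$ and that $(X,Y)\mapsto(-Y,-X)$ is a bijection between $\mathcal F^o_2(F,G)$ and $\mathcal F^o_2(\hat G,\hat F)$. Using the general quantile identities $\VaR^L_p(Z)=-\VaR^R_{1-p}(-Z)$ and $\VaR^R_p(Z)=-\VaR^L_{1-p}(-Z)$, which hold for arbitrary $Z$ with no continuity assumption, and taking infima over the constraint set, I obtain
$$\underline{\VaR}^L_p(\mathcal F^o_2(F,G))=-\overline{\VaR}^R_{1-p}(\mathcal F^o_2(\hat G,\hat F)),\quad \underline{\VaR}^R_p(\mathcal F^o_2(F,G))=-\overline{\VaR}^L_{1-p}(\mathcal F^o_2(\hat G,\hat F)).$$
Since $\hat F$ and $\hat G$ are again strictly increasing and continuous with $\hat G\le_{\rm st}\hat F$, the worst-case equality just proved applies at level $1-p$ to the pair $(\hat G,\hat F)$, so the two right-hand sides coincide; hence $\underline{\VaR}^L_p=\underline{\VaR}^R_p$ on $\mathcal F^o_2(F,G)$.

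The only genuine difficulty lies outside this argument, in the continuity asserted by Lemma \ref{lem:conto}, whose proof is deferred to the Appendix and is the delicate part. Granting that lemma, the present statement reduces to a one-line squeeze together with a reflection, and the only care required is to keep the directions of the quantile inequalities and the $1-p$ bookkeeping straight.
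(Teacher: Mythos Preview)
Your proof is correct and follows essentially the same approach as the paper: the worst-case equality is obtained by the squeeze $\overline{\VaR}^R_{p-\epsilon}\le\overline{\VaR}^L_p\le\overline{\VaR}^R_p$ together with the continuity of Lemma~\ref{lem:conto}, exactly as the paper does. For the best-case equality the paper simply says ``the proof is similar and thus omitted''; your explicit reduction via the reflection $(X,Y)\mapsto(-Y,-X)$ is a perfectly valid way to fill this in and is in the spirit of how the paper handles the analogous reduction in Proposition~\ref{cor:var}.
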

\begin{proof}
For $\epsilon>0$, we have $\overline{\VaR}^{R}_{p-\epsilon}(\mathcal F^o_2(F,G))\le \overline{\VaR}_p^{L}(\mathcal F^o_2(F,G))\le \overline{\VaR}^{R}_p(\mathcal F^o_2(F,G))$. By Lemma \ref{lem:conto}, $\overline{\VaR}_p^R(\mathcal F^o_2(F,G))$ is a continuous function of $p \in (0,1)$. Letting $\epsilon \downarrow 0$, we get the desired result for worst-case value of VaR$^{L}$ and VaR$^{R}$. The proof for the best-case value of VaR$^{L}$ and VaR$^{R}$ is similar and thus omitted.
\end{proof}
By Proposition \ref{VaR:cont}, in  practical situations of risk management,
 there is no need to distinguish between VaR$^{L}_{p}$ and VaR$^{R}_{p}$ when we calculate their bounds with  the order constraint; this observation will be useful in the numerical studies in Section \ref{sec6}.

\subsection{Probability bounds}\label{prob}
In risk management and quantitative finance,  probability bounds of the aggregate position are also of great interest. In the unconstrained problem (i.e., only marginal distributions are known),
%for $F,G\in \mathcal{M}$ and $t\in\R$,
the probability bounds
on the aggregation of two risks are given by \cite{R82}.
%$$M(t):=\sup\left\{\p(X+Y\le t):(X,Y)\in \mathcal{F}_{2}(F,G)\right\}=\inf_{x\in\R}\{F(x)+G(t-x)\},$$
%and
%$$m(t):=\inf \left\{\p(X+Y< t):(X,Y)\in \mathcal{F}_{2}(F,G)\right\}=\left(\sup_{x\in\R}\{F(x)+G(t-x)\}-1\right)_{+}.$$

 For $F,G\in \mathcal{M}$ such that $F\le_{\rm{st}} G$ and $t\in\R$,  we are interested in the upper and lower bounds of probability with the order constraint, defined as
$$M^{o}(t):=\sup\left\{\p(X+Y\le t):(X,Y)\in \mathcal{F}_{2}^{o}(F,G)\right\}
$$
and
$$m^{o}(t):=\inf \left\{\p(X+Y< t):(X,Y)\in \mathcal{F}_{2}^o(F,G)\right\}.$$
The above upper and lower bounds of probability can be obtained by inverting the lower and upper bounds of VaR, respectively.
In particular, for $p\in(0,1)$, we have
$$\overline{\VaR}^{R}_p(\mathcal F_2^o(F,G))=(m^{o})^{-1}(p)\text{~~~and~~~}\underline{\VaR}^{L}_p(\mathcal F_2^o(F,G))=(M^{o})^{-1}(p).$$
For continuous marginal distributions, we can invert the analytical solutions in Proposition \ref{cor:var} to obtain the probability bounds with the order constraint. While the analytical solutions to VaR bounds does not necessarily lead to an explicit results for probability bounds, a numerical algorithm in Section \ref{sec6} can be used to approximate probability bounds. The following example compares probabilities bounds with and without order constraint for Pareto marginal distributions.
\begin{example}[Pareto: Probability bounds] \label{ex:prob}
Following the   assumptions in Examples \ref{ex:pareto} and \ref{ex:varbounds}, we convert the VaR bounds in Example \ref{ex:varbounds} to obtain   probability bounds with the order constraint:
$$M^{o}(t)=1-\frac{4}{t},~~~t\ge 4, \text{~~~and~~~}m^{o}(t)=1-\frac{2}{t-1},~~~t\ge 3.$$
The probability bounds with and without order constraint are plotted in Figure \ref{fig:prob}. The bounds without  the order constraint are denoted by $M$ and $m$. The figure shows that the order constraint improves the lower probability bound a lot while there is no improvement for the upper bound (the difference between $M^o$ and $M$ is invisible). When two risks are countermonotonic or DL-coupled, the corresponding probability   (denoted by $\mathrm{Prob}^{\rm CT}$ and $\mathrm{Prob}^{\rm DL}$, respectively) lies between the constrained bounds for $t\ge 8$.
\begin{figure}[htbp]
\centering
\caption{Probability bounds in Example \ref{ex:prob}}\label{probbound}
\label{fig:prob}
\includegraphics[height=7cm]{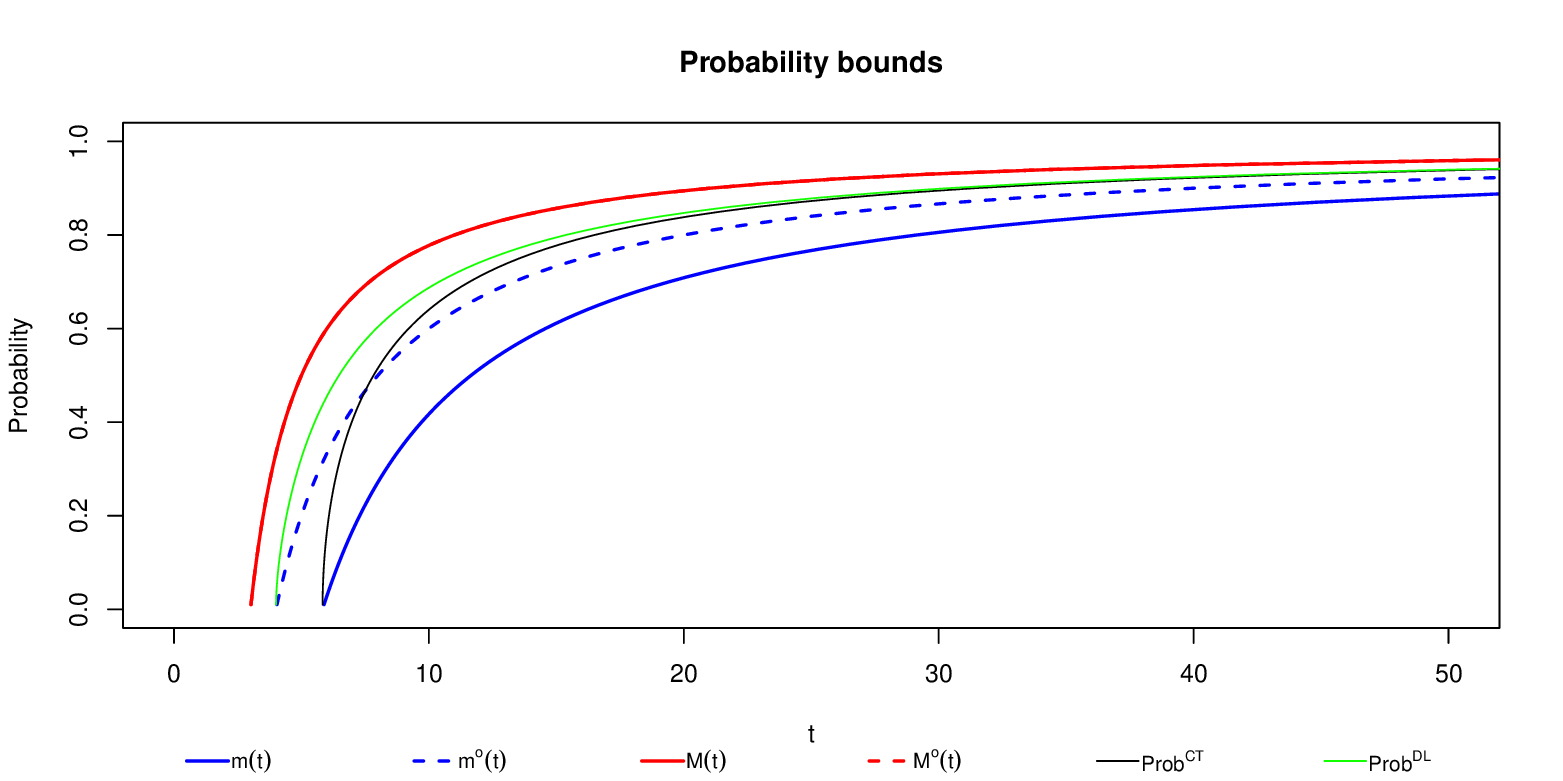}
\end{figure}
\end{example}

\section{Numerical results and a real-data application}\label{sec6}

In this section, we use numerical examples and a case study to illustrate the impact of the order constraint on VaR bounds (the worst-case and best-case values of $\VaR^{R}$ and $\VaR^{L}$), and RVaR bounds. For convenience, we do not distinguish between $\VaR^{L}$ and $\VaR^{R}$ when we calculate their bounds (Proposition \ref{VaR:cont}). Both $\VaR^{L}$ and $\VaR^{R}$ are referred to  as $\VaR$ in numerical results. We only illustrate the numerical calculations for VaR bounds. RVaR bounds can be calculated in a similar manner.

\subsection{General methodology}
Let  $F,G\in \mathcal {M}$ be continuous distributions such that $F\le_{\rm st}G$. As suggested by Theorem \ref{thm2}, the best-case  and worst-case values of VaR are determined by the lower tail and upper tail distributions of $F$ and $G$, respectively.
To approximately calculate $\overline{\VaR}^{R}_{p}$ for $p\in (0,1)$, we first discretize the upper $p$-tail distributions $F^{[p,1]}$ and $G^{[p,1]}$. Fix an integer $n$ and let
$$x_{i}=F^{-1}\left(p+\frac{(1-p)(n-i)}{n}\right) \text{   and   } y_{i}=G^{-1}\left(p+\frac{(1-p)(n-i)}{n}\right),$$
for $i=1,\dots,n$. Define $S_{X}^{[p,1]}=\{x_{1},\dots,x_{n}\}$ and $S_{Y}^{[p,1]}=\{y_{1},\dots,y_{n}\}$. If $S_{X}^{[p,1]}$ and $S_{Y}^{[p,1]}$ have no identical locations, we use the following algorithm introduced in \cite{NW20} to approximate the DL coupling between $F^{[p,1]}$ and $G^{[p,1]}$. Let $S_{1}=S_{Y}^{[p,1]}$, we iterate for $k=1,\dots,n$
\begin{enumerate}[(i)]
\item
$T(x_{k}):=\min\left\{y\in S_{k}:y\geq x_{k}\right\}$,
\item
$S_{k+1}=S_{k}\backslash\left\{T(x_{k})\right\}$.
\end{enumerate}
Let $s_{k}=x_{k}+T(x_{k})$, $k=1,\dots,n$. We use $\min\{s_{k}:k=1,\dots,n\}$
% and {\color{red}$\sum_{k=1}^{k^{*}}s_{(k)}/k*$}, where $k^{*}=\lfloor n(q-p)/(1-p)\rfloor$,
 as the approximation for $\overline{\VaR}^{R}_{p}$.
 %and RVaR$_{p,q}$
 The best-case value of VaR$^{L}$ can be obtained in a similar manner.  The unconstrained bounds of VaR, attained by conditional countermonotonicity, can be numerically computed by the Rearrangement Algorithm (RA) in \cite{PR12} and \cite{EPR13}.
 Similar procedures can be constructed for discrete distributions.

The difference between the worst-case and best-case values of a risk measure is called the \emph{Dependence Uncertainty spread} (DU-spread) for the risk measure, which is used as a measure of dependence uncertainty (see \cite{EWW15}). We use the DU-spread reduction defined in \cite{puccetti2017reduction} to measure the improvement on unconstrained VaR bounds due to the order constraint. Denote by $L$ and $U$ the unconstrained best-case and worst-case values of a risk measure $\rho$. Similarly, denote by $L^{o}$ and $U^{o}$ the bounds with the order constraint.
 The lower and upper reductions of DU-spread are defined as
\begin{align}
\label{eq:rw-april3}
R^{L}=\frac{L^{o}-L}{U-L}~~\text{and}~~~R^{U}=\frac{U-U^{o}}{U-L}.
\end{align}
The DU-spread reduction is defined as the sum of lower and upper DU-spread reductions, which is $R=R^{L}+R^{U} \in [0,1]$.

\subsection{Numerical examples}

Consider distributions $F$ and $G_{i}$  such that their means are 50 and $50+10i$ and $F\le _{\rm st}G_{i}$, $i=1,2,3$. The distributions are specified in uniform and Pareto cases as below.
\begin{table}[htbp]
\centering
\caption{Distributions for numerical examples}
\label{t3}{\small
\begin{tabular}{c|l|l|l|l}
  \hline
    Uniform & $F(x)=x/100$ & $G_{1}(x)=x/120$& $G_{2}(x)=x/140$ & $G_{3}(x)=x/160$\\
    \hline
    Pareto & $F(x)=1-(25/x)^{2}$ & $G_{1}(x)=1-(30/x)^{2}$& $G_{2}(x)=1-(35/x)^{2}$ & $G_{3}(x)=1-(40/x)^{2}$\\
  \hline

\end{tabular}
}
\end{table}

For both $F$ and $G_{i}$,  $i=1,2,3$, being uniform or Pareto distributions, we calculate the improvement (i.e., reduction of DU-spread) on VaR bounds and RVaR bounds. We also present the results of VaR$_{p}(X+Y)$ if $X\sim F$ and $Y\sim G_{i}$ are independent, comonotonic, DL-coupled and countermonotonic, $i=1,2,3$.  The results of VaR bounds for uniform and Pareto cases can be found in Figures \ref{ubound} and \ref{pbound}, respectively. The results  of RVaR bounds can be found in Tables \ref{ubound:RVaR} and \ref{pbound:RVaR}. We make the following observations.
\begin{enumerate}[(i)]
\item
 The DU-spread reductions in all tables and figures show that the improvement due to the order constraint is significant for both VaR and RVaR. The improvement for VaR becomes larger as $p$ increases from 0.9 to 1.
\item
For all uniform and Pareto cases, as the mean of $G_{i}$ becomes larger, the improvement becomes smaller. In other words, the more ``similar'' the distributions $F$ and $G_{i}$ are, the more improvement is gained from imposing the order constraint. %{\color{blue}If the marginal distributions are identical, only comonotonicity satisfies the order constraint and the risk bounds become a singleton. However, the order constraint is not likely to hold  for two identical marginal distributions in practice.}
\item
 The order constraint has an overall larger improvement on the bounds for uniform distributions than those for Pareto distributions. Nevertheless, the improvement on the worst-case value is insignificant for uniform distributions. This is because $\essinf G_{i}^{[p,1]}\ge \esssup F^{[p,1]}$ for $p\in(0.9,1)$, and the DL coupling of the upper $p$-tail distributions is the same as countermonotonicity. While for Pareto distributions, the improvement on the worst-case value is even larger than that on the best-case value.
 \item
 For the uniform cases, if the risks are countermonotonic,  both VaR and RVaR are close to the unconstrained lower bound.  If the risks are DL-coupled,  both VaR and RVaR are close to the constrained lower bound for the uniform cases while they lie between the constrained bounds for the Pareto cases. If the risks are comonotonic,  both VaR and RVaR lie between the constrained bounds for all cases.

\end{enumerate}

\begin{figure}[htbp]
\centering
\caption{Uniform cases: $\VaR_{p}$ bounds, DU reduction and $\VaR_{p}$ of the aggregate risk with different dependence structures are contained in this figure. VaR values with independence, comonotonicity, countermonotonicity and DL coupling are denoted by $\VaR^{\rm Ind}$, $\VaR^{\rm C}$, $\VaR^{\rm Co}$ and $\VaR^{\rm DL}$, respectively.  }\label{ubound}
\includegraphics[height=5cm]{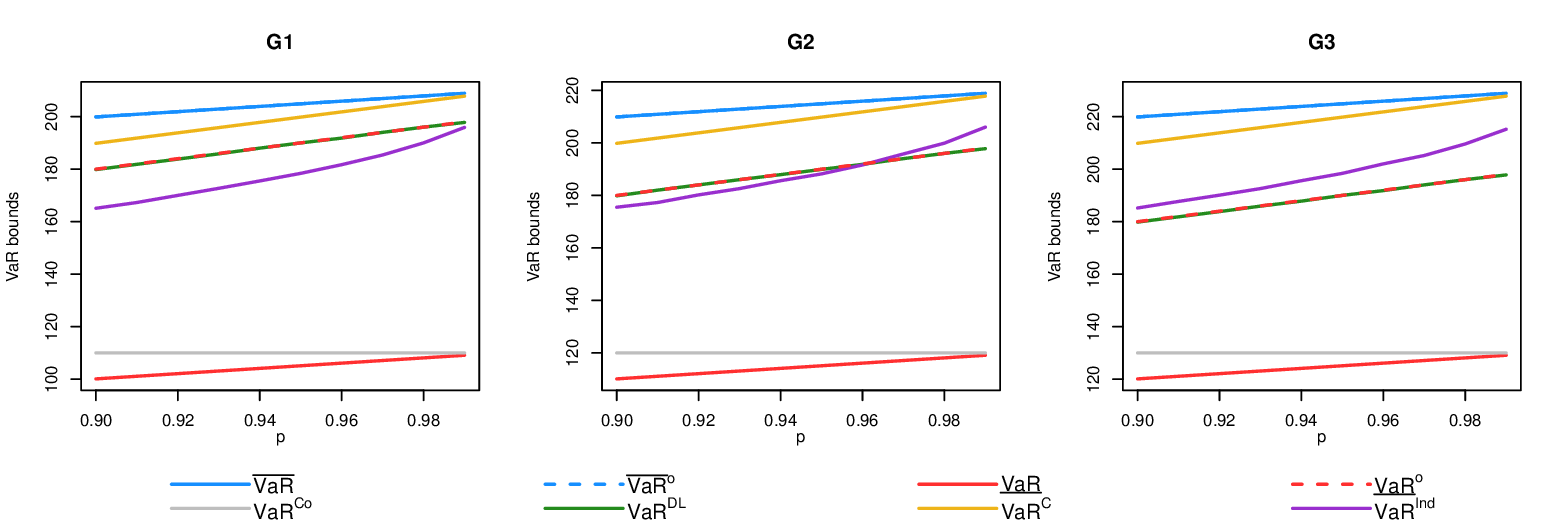}
\includegraphics[height=5cm]{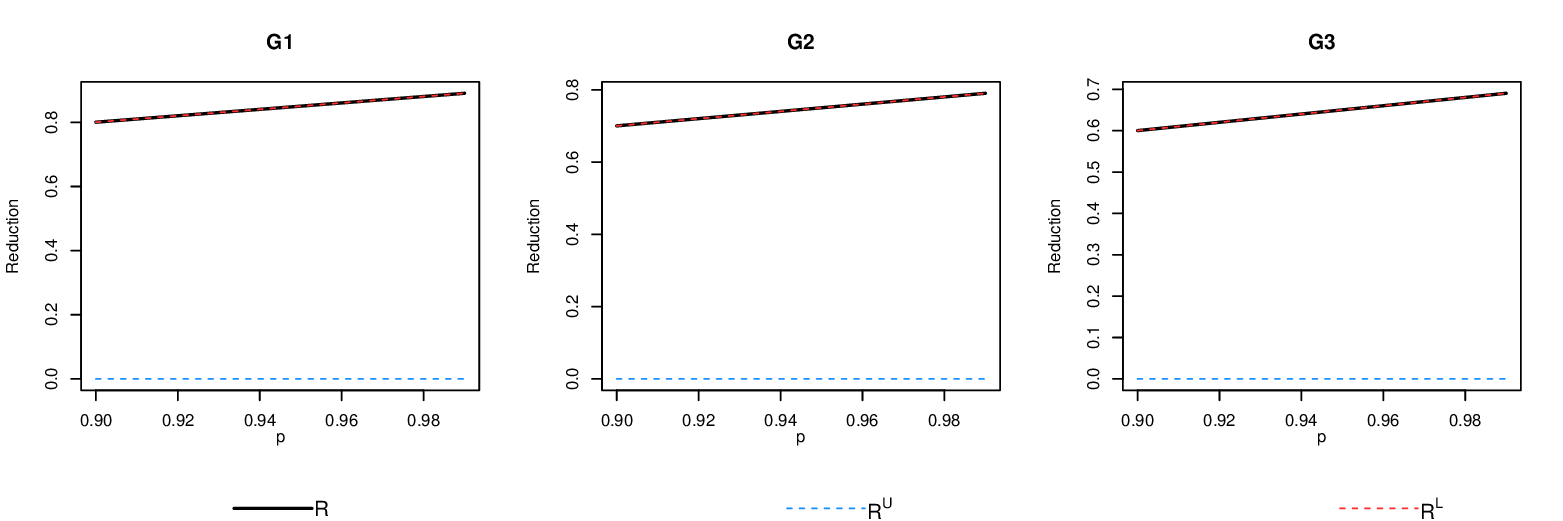}
\end{figure}

\begin{figure}[htbp]
\centering
\caption{Pareto cases: $\VaR_{p}$ bounds, DU reduction and $\VaR_{p}$ of the aggregate risk with different dependence structures are contained in this figure. VaR values with independence, comonotonicity, countermonotonicity and DL coupling are denoted by $\VaR^{\rm Ind}$, $\VaR^{\rm C}$, $\VaR^{\rm Co}$ and $\VaR^{\rm DL}$, respectively.  }\label{pbound}
\includegraphics[height=5cm]{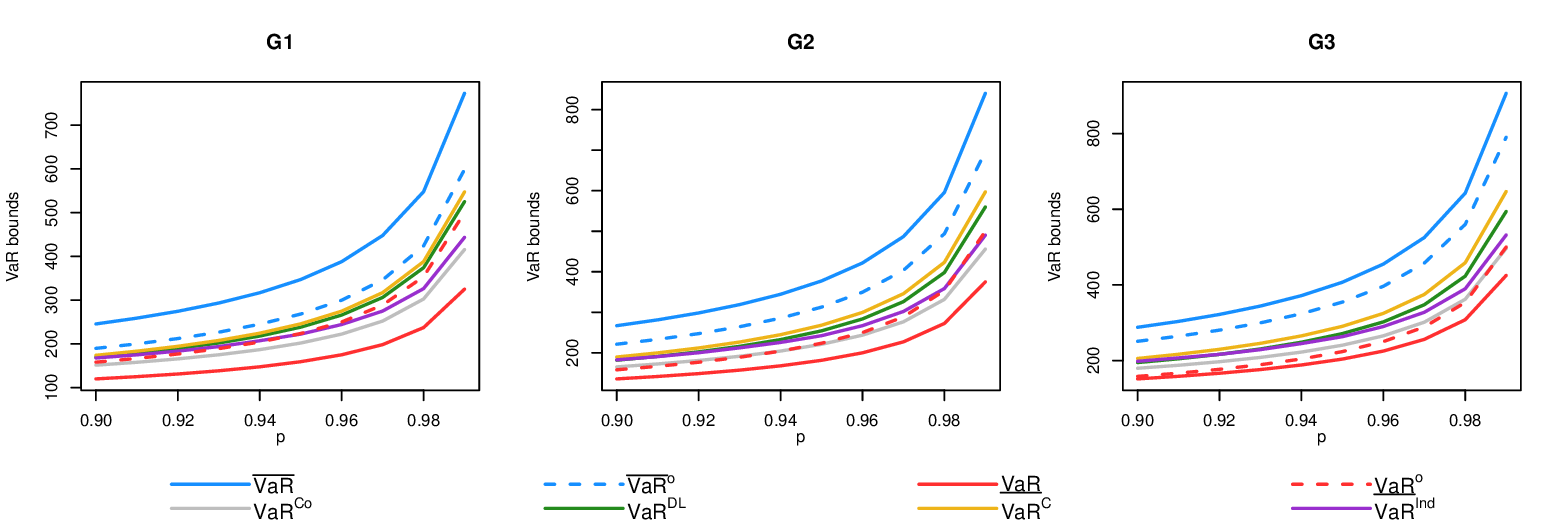}
\includegraphics[height=5cm]{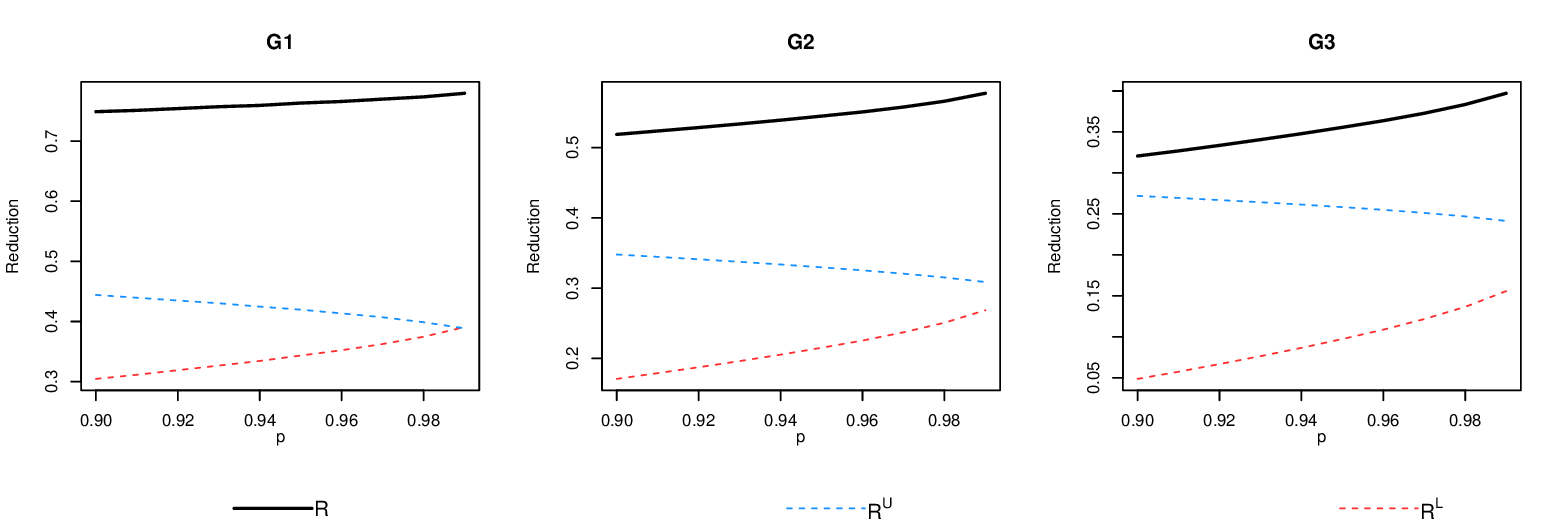}
\end{figure}

\begin{table}[htbp]
\centering
\caption{Uniform cases: $\RVaR_{p,q}$ bounds, DU reduction and $\RVaR_{p,q}$ of the aggregate risk with different dependence structures are contained in this table. Marginal distributions in Case $i$ are uniform distributions $F$ and $G_i$ given in Table \ref{t3}. }
\label{ubound:RVaR}
{\scriptsize
\begin{tabular}{cccccc}
    \hline
    &$p$ & $75\%$& $90\%$ & $95\%$  & $99.5\%$ \\
    &$q$ & $90\%$& $95\%$ & $99.5\%$  & $99.9\%$ \\
    \hline
   \multirow{7}{*}{Case 1}&  Constrained bounds & $( 165 , 182 )$ & $( 185 , 200 )$ & $( 195 , 205 )$ & $( 200 , 209 )$ \\
  &Unconstrained bounds & $( 100 , 185 )$ & $( 105 , 200 )$ & $( 110 , 205 )$ & $( 110 , 209 )$ \\
  &$(R^{L},R^{U},R)$ & $( 0.77 , 0.04 , 0.81 )$ & $( 0.84 , 0 , 0.84 )$ & $( 0.89 , 0 , 0.89 )$ & $( 0.9 , 0 , 0.9 )$ \\ [3mm]
  &Independence & 151 & 171 & 187 & 203 \\
  &Comonotonicity & 175 & 195 & 204 & 209 \\
  &Countermonotonicity & 110 & 110 & 110 & 110 \\
  &DL coupling  & 165 & 185 & 194 & 199 \\

   \hline

   \multirow{7}{*}{Case 2}&  Constrained bounds & $( 165 , 195 )$ & $( 185 , 210 )$ & $( 195 , 215 )$ & $( 200 , 219 )$ \\
  &Unconstrained bounds & $( 110 , 195 )$ & $( 115 , 210 )$ & $( 120 , 215 )$ & $( 120 , 219 )$ \\
  &$(R^{L},R^{U},R)$ & $( 0.65 , 0 , 0.65 )$ & $( 0.74 , 0 , 0.74 )$ & $( 0.79 , 0 , 0.79 )$ & $( 0.8 , 0 , 0.8 )$ \\ [3mm]
  &Independence & 161 & 181 & 197 & 212 \\
  &Comonotonicity & 185 & 205 & 214 & 219 \\
  &Countermonotonicity & 120 & 120 & 120 & 120 \\
  &DL coupling  & 165 & 185 & 195 & 199 \\

   \hline

   \multirow{7}{*}{Case 3}& Constrained bounds & $( 165 , 205 )$ & $( 185 , 220 )$ & $( 195 , 225 )$ & $( 200 , 229 )$ \\
  &Unconstrained bounds & $( 120 , 205 )$ & $( 125 , 220 )$ & $( 130 , 225 )$ & $( 130 , 229 )$ \\
  &$(R^{L},R^{U},R)$ & $( 0.53 , 0 , 0.53 )$ & $( 0.63 , 0 , 0.63 )$ & $( 0.68 , 0 , 0.68 )$ & $( 0.7 , 0 , 0.7 )$ \\ [3mm]
  &Independence & 171 & 192 & 207 & 222 \\
  &Comonotonicity & 195 & 215 & 224 & 229 \\
  &Countermonotonicity & 130 & 130 & 130 & 130 \\
  &DL coupling  & 165 & 185 & 195 & 199 \\

   \hline

\end{tabular}
}
\end{table}

\begin{table}[htbp]
\centering
\caption{Pareto cases: $\RVaR_{p,q}$ bounds, DU reduction and $\RVaR_{p,q}$ of the aggregate risk with different dependence structures are contained in this table. Marginal distributions in Case $i$ are Pareto distributions $F$ and $G_i$ given in Table \ref{t3}.}
\label{pbound:RVaR}
{\scriptsize
\begin{tabular}{cccccc}
    \hline
    &$p$ & $75\%$& $90\%$ & $95\%$  & $99.5\%$ \\
    &$q$ & $90\%$& $95\%$ & $99.5\%$  & $99.9\%$ \\
    \hline
   \multirow{7}{*}{Case 1}& Constrained bounds & $( 125 , 140 )$ & $( 185 , 213 )$ & $( 354 , 379 )$ & $( 1012 , 1085 )$ \\
  &Unconstrained bounds & $( 103 , 164 )$ & $( 140 , 254 )$ & $( 262 , 409 )$ & $( 679 , 1209 )$ \\
  &$(R^{L},R^{U},R)$ & $( 0.35 , 0.39 , 0.75 )$ & $( 0.4 , 0.36 , 0.76 )$ & $( 0.63 , 0.21 , 0.83 )$ & $( 0.63 , 0.23 , 0.86 ) $\\ [3mm]
  &Independence & 136 & 191 & 316 & 800 \\
  &Comonotonicity & 135 & 204 & 373 & 1063 \\
  &Countermonotonicity & 124 & 172 & 292 & 774 \\
  &DL coupling  & 132 & 198 & 360 & 1017 \\

   \hline

   \multirow{7}{*}{Case 2}& Constrained bounds & $( 129 , 157 )$ & $( 188 , 240 ) $& $( 371 , 418 )$ & $( 1042 , 1204 )$ \\
  &Unconstrained bounds & $( 114 , 178 )$ & $( 156 , 276 )$ & $( 289 , 446 )$ & $( 755 , 1316 )$ \\
  &$(R^{L},R^{U},R)$ & $( 0.24 , 0.33 , 0.57 )$ & $( 0.27 , 0.3 , 0.57 )$ & $( 0.53 , 0.18 , 0.7 )$ & $( 0.51 , 0.2 , 0.71 )$ \\ [3mm]
  &Independence & 148 & 208 & 345 & 885 \\
  &Comonotonicity & 147 & 222 & 407 & 1160 \\
  &Countermonotonicity & 135 & 188 & 320 & 851 \\
  &DL coupling  & 143 & 212 & 383 & 1075 \\

   \hline

   \multirow{7}{*}{Case 3}& Constrained bounds & $( 135 , 173 )$ & $( 194 , 265 )$ & $( 391 , 456 )$ & $( 1083 , 1321 )$ \\
  &Unconstrained bounds & $( 125 , 192 )$ & $( 174 , 298 )$ & $( 317 , 482 )$ & $( 838 , 1421 )$ \\
  &$(R^{L},R^{U},R)$ & $( 0.15 , 0.29 , 0.44 )$ & $( 0.17 , 0.27 , 0.43 )$ & $( 0.45 , 0.16 , 0.6 )$ & $( 0.42 , 0.17 , 0.59 )$ \\ [3mm]
  &Independence & 161 & 225 & 375 & 972 \\
  &Comonotonicity & 159 & 241 & 441 & 1256 \\
  &Countermonotonicity & 147 & 205 & 349 & 932 \\
  &DL coupling  & 153 & 226 & 407 & 1144 \\

   \hline

\end{tabular}
}
\end{table}

As the observations on VaR and RVaR are similar, we will focus on studying VaR for the rest of the paper. In previous examples for Pareto distributions, the tail parameter\footnote{We use the Pareto$(\theta,\alpha)$ distribution parametrized by $F(x)=1-\(\theta/x\)^{\alpha}$ for $x\ge \theta$, where $\theta\in \R$ is the location parameter and $\alpha >0$ is the tail parameter.} of distributions $F$ and $G$ are fixed (see Table \ref{t3}). Next, we study the improvement of VaR bounds as the tail parameter of the distribution $G$ varies. Let $F(x)=1-(25/x)^2$ and $G(x)=1-(25/x)^{\alpha}$, $\alpha\le 2$. VaR bounds are calculated as $\alpha$ increases from 1.3 to 2. Results can be found in Figure \ref{alpha}. We observe that the larger $\alpha$ is, the greater the improvement is gained from the order constraint. However, the improvement on the unconstrained lower bound is neligible for small $\alpha$. As in previous examples for Pareto distributions, comonotonicity and DL coupling produce very close VaR values.
\begin{figure}[t]
\centering
\caption{This figure contains $\VaR_{p}$ bounds, DU reduction and $\VaR_{p}$ of the aggregated Pareto risks with different dependence structures as $\alpha$ changes, where $F=\mathrm{Pareto}(25,2)$ and $G=\mathrm{Pareto}(25,\alpha)$. VaR values with independence, comonotonicity, countermonotonicity and DL coupling are denoted by $\VaR^{\rm Ind}$, $\VaR^{\rm C}$, $\VaR^{\rm Co}$ and $\VaR^{\rm DL}$, respectively.}\label{alpha}
\includegraphics[height=5cm]{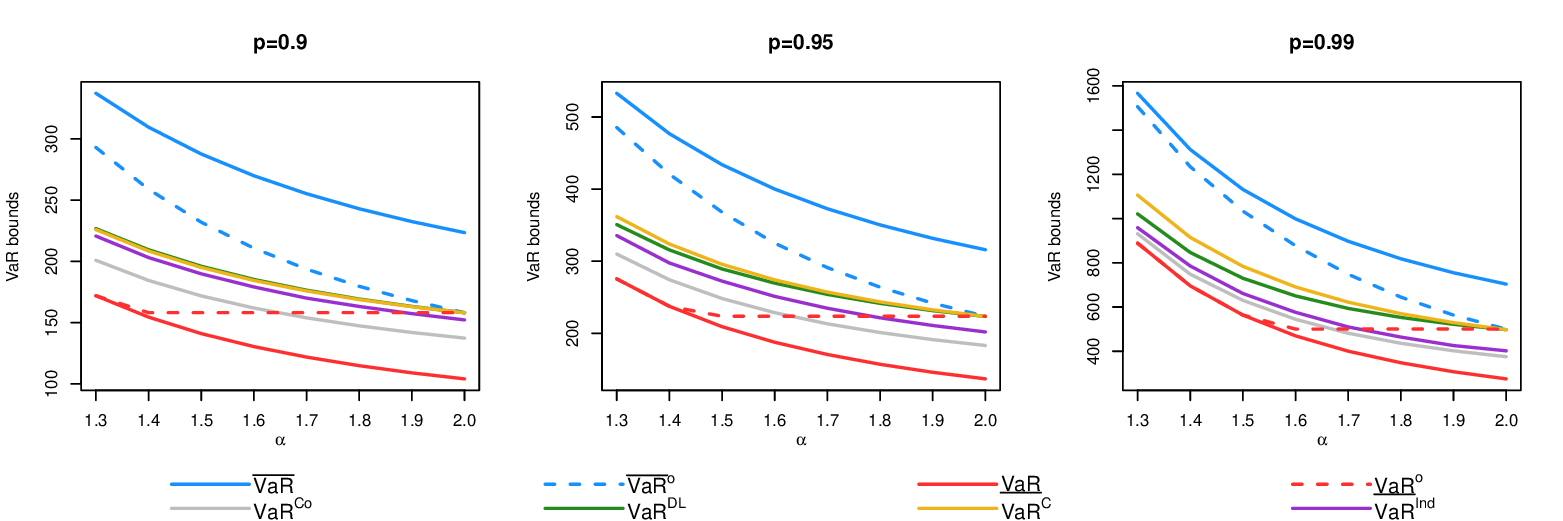}
\includegraphics[height=5cm]{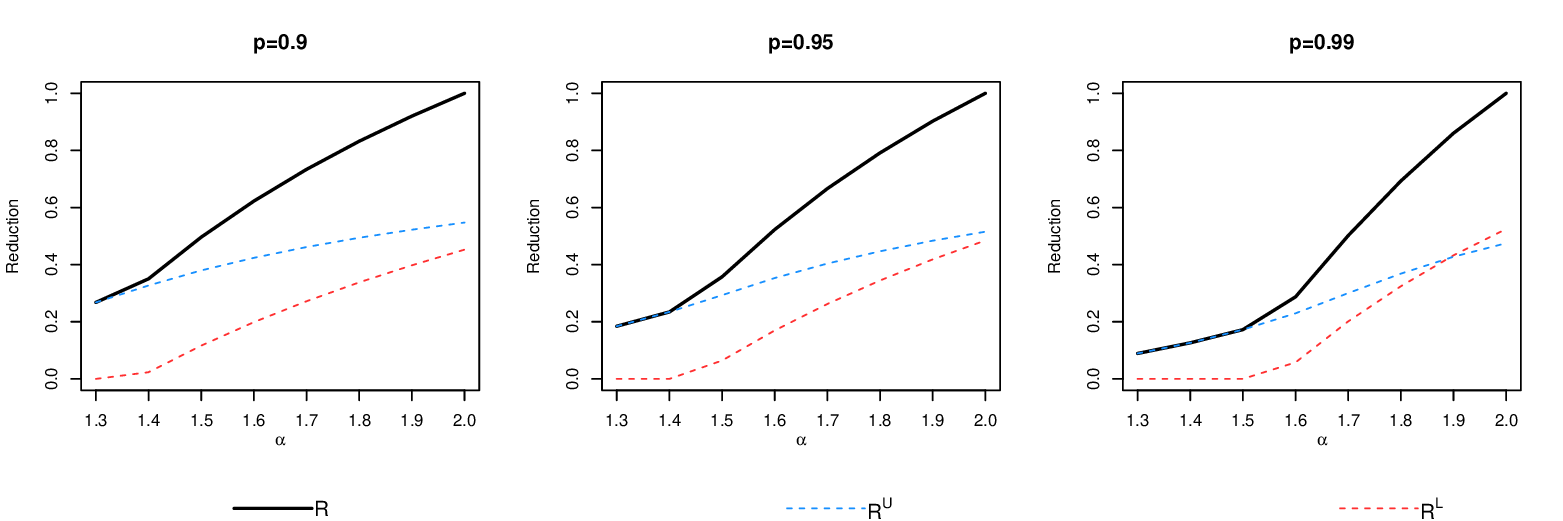}
\end{figure}

\subsection{Case study: Health insurance policies}\label{sec6.2}

In this case study, we calculate the bounds of VaR with and without order constraint for a health insurance portfolio.
Insurance policies can be classified according to certain characteristics of policyholders. For illustration purposes,  we use gender to make classifications on health insurance policies (this may not be allowed in certain countries).
The aggregate loss of the portfolio can be expressed as $S=X+Y$ where $X\sim F$ and $Y\sim G$ represent the losses caused by females and males, respectively, from a portfolio of $50$ males and $50$ females.
It is sensible to guess that $F \le_{\rm st} G$, due to the morbidity differences between males and females; this will be confirmed by our dataset.
 Moreover, since the losses by males and females are affected by many common factors, the assumption that $X\le Y$ seems also reasonable.
%.  Since the number of males is very close to females in a  large portfolio,   we assume that $X$ and $Y$ are losses coming from the same size of policy-holders of a specific gender.

 We use the Hospital Costs data of \cite{frees2009regression} which were originally from the Nationwide Inpatient Sample of the Healthcare Cost and Utilization Project (NIS-HCUP), to represent the individual losses of the health insurance policies. The data contains 500 observations with 244 males and 256 females. We generate $1000$ bootstrapping samples of the total losses caused by $50$ males and $50$ females, respectively.

  The empirical distributions  $\hat F$ and $\hat G$ from the 1000 bootstrapping samples are plotted in the top-left panel of Figure \ref{cdf}.
  %We found that the stochastic order $F\le_{\rm st} G$ on the empirical distributions almost always holds if we generate $1000$ bootstrapping samples, but it shows small violation for 100 samples which may be caused by randomness in sampling (Figure \ref{cdf}, left panel).
Although $\hat F$ and $\hat G$ do not satisfy $\hat F \le_{\rm st} \hat G$,  such a violation is almost invisible (see the bottom-left panel of Figure \ref{cdf}) and possibly caused by  sampling randomness.
 Indeed,   using the Kolmogorov-Smirnov-type test of \cite{barrett2003consistent}, we cannot reject the hypothesis  $F \le_{\rm st} G$ for the bootstrap data. %with $p$-value $=0.9950125$,
  Hence, $F \le_{\rm st} G$ is   a sensible assumption.
  %To illustrate that our method can be applied to empirical distributions that do not satisfy the stochastic order (but we guess the true distributions do),
  %e will work with the empirical distributions in Figure \ref{cdf}  below.
  The  isotonic distributional regression (IDR), introduced by \cite{henzi2019isotonic}, is a nonparametric technique to estimate distributions with order restrictions (e.g., stochastic order and hazard rate order).  We use IDR to estimate $F$ and $G$ such that the stochastic order holds for the estimated distributions. The estimated distributions are plotted in the top-right panel of Figure \ref{cdf}, and they are used to calculate the VaR bounds.
  However, if $\hat F \le_{\rm st} \hat G$ holds already, IDR is not necessary, and we can directly use the empirical distributions.

\begin{figure}[t]
\centering
\caption{Empirical and estimated distributions of $X$ and $Y$. Top panels: entire region; bottom panels: tail region}\label{cdf}
\includegraphics[height=5.5cm]{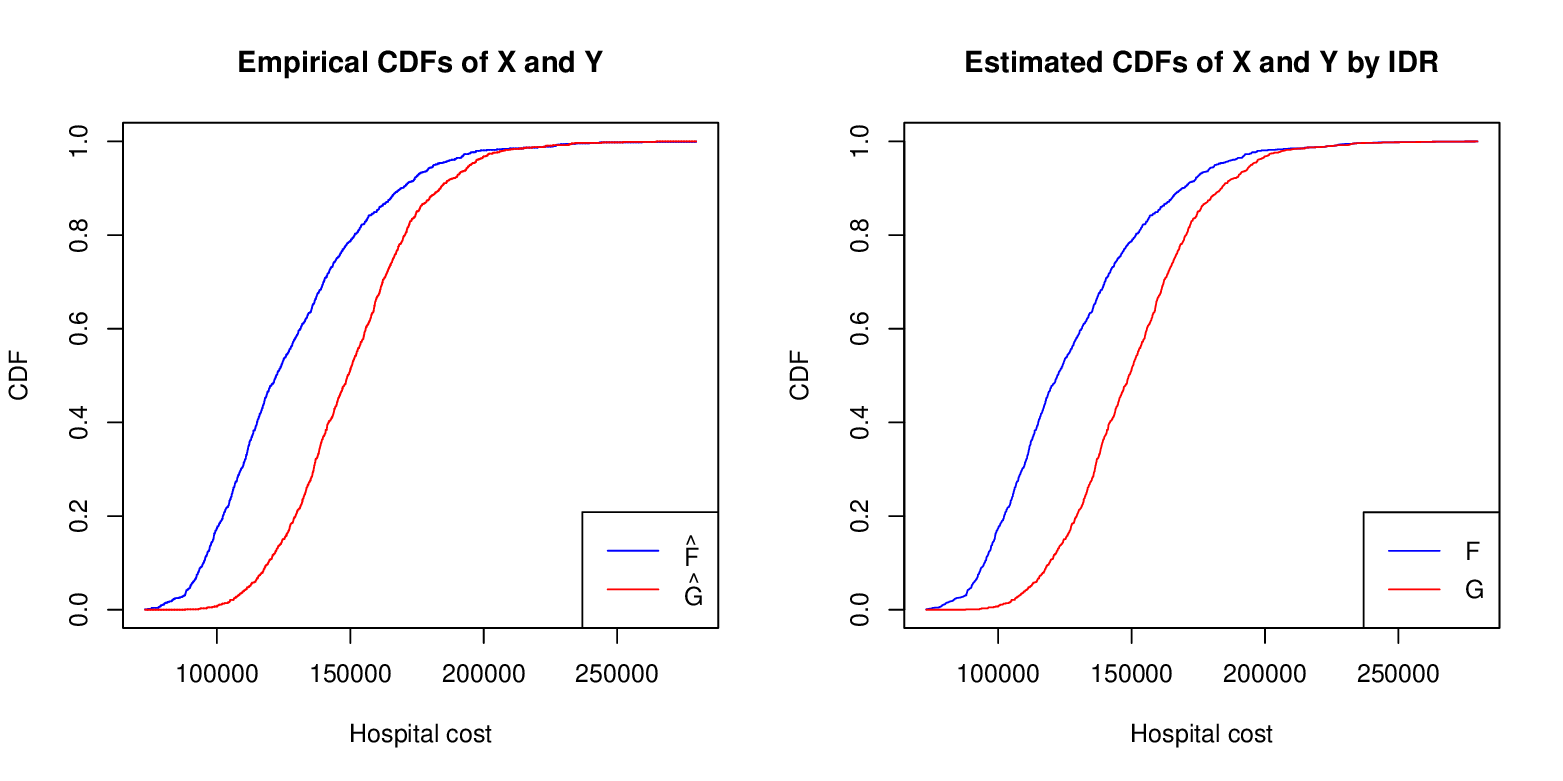}\\
\includegraphics[height=5.5cm]{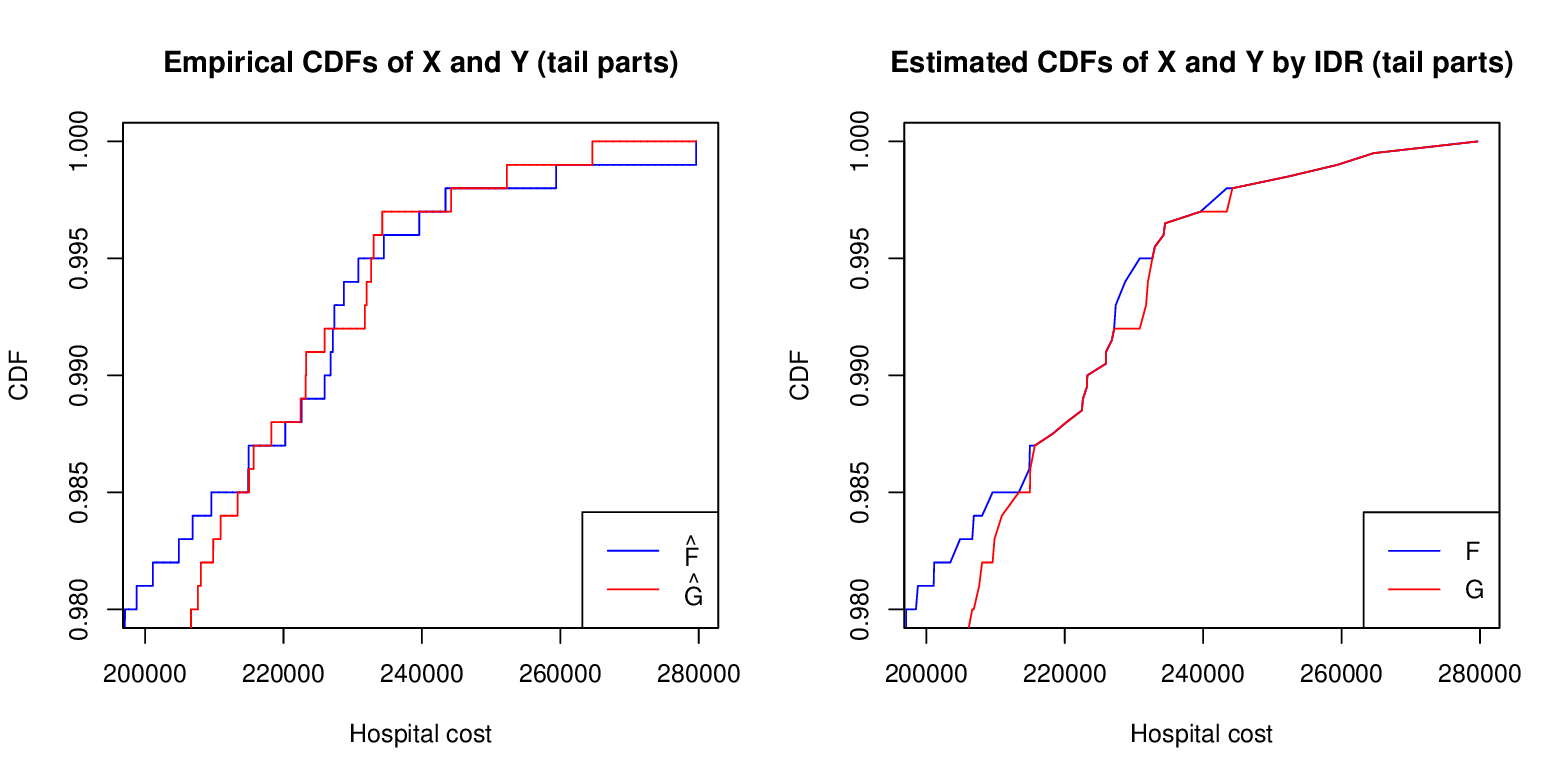}
\end{figure}

Using the IDR estimated distributions $F$ and $G$, VaR bounds and the improvements on the DU-spread in \eqref{eq:rw-april3}  are presented in Figure \ref{hosp}. VaR values are also presented if risks are independent, comonotonic, DL-coupled and countermonotonic. The extra order constraint greatly improves the unconstrained bounds of VaR, as shown by a  DU-spread reduction of more than $69\%$. In particular, the improvement on the best-case value is greater than that on the worst-case value.
  The reduction is almost $100\%$ when $p$ is close to $1$; that is because
  the two distributions $F^{[p,1]}$ and $G^{[p,1]} $ are almost identical for such $p$, making the set $\mathcal F^o_2 (F^{[p,1]},G^{[p,1]} )$ very small (see Figure \ref{cdf}, bottom panels).
Moreover, we observe that if the two risks $X$ and $Y$ are countermonotonic, VaR is close to the unconstrained lower bound. If the two risks are DL-coupled  or comonotonic, VaR is  close to the constrained lower bound.
\begin{figure}[htbp]
\centering
\caption{Case study: $\VaR_{p}$ bounds, DU reduction and $\VaR_{p}$ of the aggregate risk with different dependence structures are contained in this figure. VaR values with independence, comonotonicity, countermonotonicity and DL coupling are denoted by $\VaR^{\rm Ind}$, $\VaR^{\rm C}$, $\VaR^{\rm Co}$ and $\VaR^{\rm DL}$, respectively.}\label{hosp}
\includegraphics[height=6.5cm]{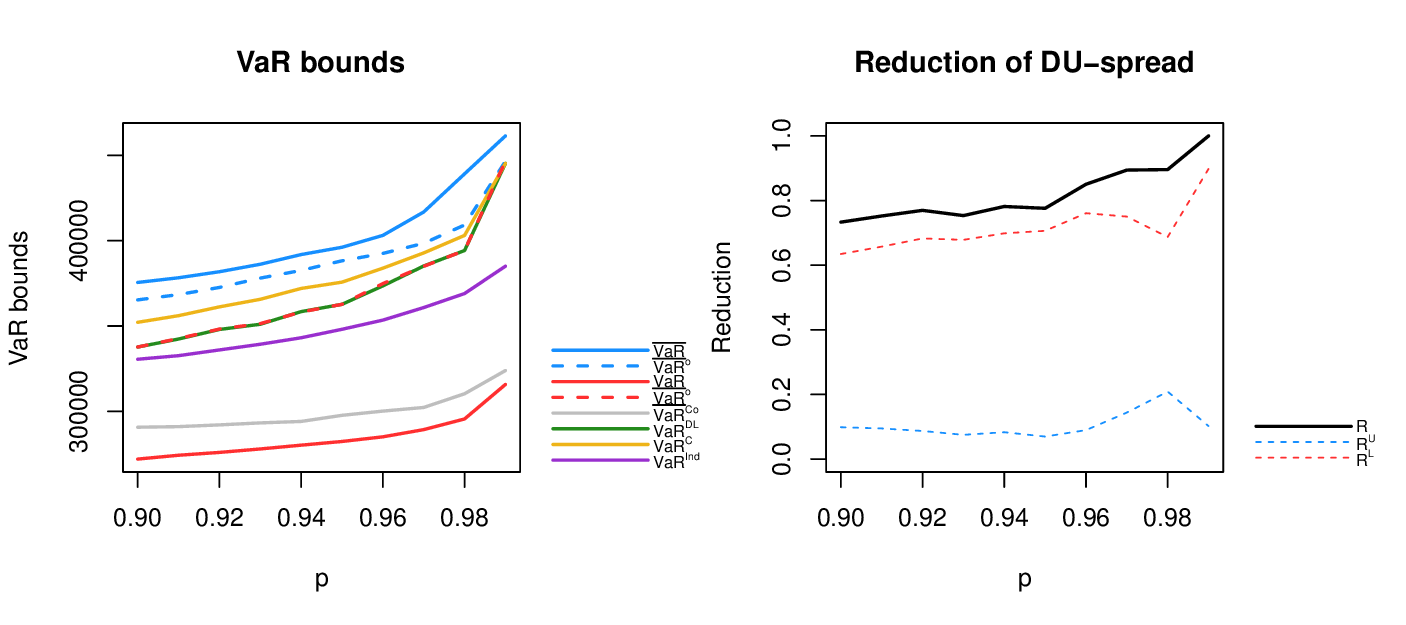}
\end{figure}

\section{Concluding remarks}\label{sec:con}

Risk aggregation of two ordered risks in the presence of unknown dependence structure is studied in this paper. The optimal dependence structures of the aggregate position are discussed in the sense of concave order, which can also be equivalently described via convex order. The largest (resp.~smallest) aggregate position in concave order is attained when the risks are  DL-coupled (resp.~comonotonic). The concave ordering bounds can be immediately applied to derive the bounds of $\le_{\rm cv}$-consistent and $\le_{\rm cx}$-consistent risk measures.

To analyze bounds on tail risk measures such as VaR,
 we introduce the notion of strong stochastic order and develop several theoretical properties. In particular, if the generator of the tail risk measure is $\le_{\rm cv}$-consistent, the worst-case value of the tail risk measure with the order constraint can be attained by $p$-concentrated risks, and it is attained when the upper-tail risks are DL-coupled. With a specific focus on VaR, analytical solutions are derived. Numerical studies show that the extra order constraint on top of the marginal distributions can significantly improve the bounds of risk measures which are solely based on marginal distributions.

 There are some limitations of the current setup considered in this paper. First, the assumption  $X\le Y$ for two risks $X$ and $Y$ is arguably quite  strong. As we have seen from the numerical results,  significant improvement of the constrained bounds over the unconstrained ones requires that the risks are of similar size, which however renders the ordering assumption difficult to satisfy.

We have focused on the problem of two ordered random variables in this paper, while a more general problem considering the order constraint among several risks in a large portfolio would also be interesting.
Such a constraint is motivated by monotone treatment effect analysis in causal inference (see \cite{M97}). The statistical inference of stochastically ordered distributions can be handled via IDR of \cite{henzi2019isotonic}.
 Let $G_1,\dots,G_n$ be  $n$ distributions satisfying $G_1\le_{\rm st} \dots \le_{\rm st}G_n$. Denote by
$$
\mathcal{R}^o _n=\{Y_1+\dots+Y_n: Y_i\sim G_i,~i=1,\dots,n,~Y_1\le\dots \le Y_n\}.
$$
We are interested in finding the worst-case value of a risk measure $\rho$ over the set $\mathcal{R}^o _n$.
If  $\rho$  is $\le_{\rm cx}$-consistent, then the worst-case value is attained by comonotonicity.
For $\rho$ that is not $\le_{\rm cx}$-consistent, such as the interesting case of VaR,
the problem is challenging and cannot be solved by the current techniques. Even without the order constraint, only limited analytical results are available for $n\ge 3$; see \cite{WPY13} and \cite{BLLW20}. We leave the theoretical analysis of this question, as well as the corresponding algorithms, for future work.

\subsection*{Acknowledgements}
%We thank an Associate Editor and two anonymous referees for helpful comments on an earlier version of the paper.
R.~Wang
acknowledges financial support from the Natural Sciences and Engineering Research Council of Canada
(NSERC, RGPIN-2018-03823, RGPAS-2018-522590) and from the Center of Actuarial Excellence Research
Grant from the Society of Actuaries.

\appendix

\section{Proof of Lemma \ref{lem:conto}}
\begin{proof}[Proof of Lemma \ref{lem:conto}]
By Proposition 1 of \cite{embrechts2013note}, as $F$ and $G$ are strictly increasing and continuous, $F^{-1}$ and $G^{-1}$ are also strictly increasing and continuous.
We first show
\begin{align}\label{eq:rw-april1}
\lim_{\epsilon \downarrow 0}\overline{\VaR}_{p+\epsilon}^{R}(\mathcal F^o_2(F,G))= \overline{\VaR}_{p}^{R}(\mathcal F^o_2(F,G)).
\end{align}
By Theorem \ref{thm2},  for $\epsilon \ge [0,1-p)$,
 $$ \overline{\VaR}_{p+\epsilon}^{R}(\mathcal F^o_2(F,G)) = \essinf(X_\epsilon+Y_\epsilon) $$
 where   $(X_\epsilon, Y_\epsilon) \sim D_*^{F^{[p+\epsilon,1]},G^{[p+\epsilon,1]}} $.
Since $X_0$ and $Y_0$ have continuous distributions, using Corollary 2.5 of \cite{NW20}, we have $(X_\epsilon, Y_\epsilon)\to (X_0,Y_0) $ in distribution.
Since $\essinf$ is upper semicontinuous with respect to convergence in distribution,
we have $$\overline{\VaR}_{p}^{R}(\mathcal F^o_2(F,G))
= \essinf (X_0+Y_0) \ge \lim_{\epsilon \downarrow 0} \essinf(X_\epsilon+Y_\epsilon) = \lim_{\epsilon \downarrow 0}\overline{\VaR}_{p+\epsilon}^{R}(\mathcal F^o_2(F,G)), $$
which implies \eqref{eq:rw-april1}.
In what follows, we will show
\begin{align}\label{eq:rw-april2}
\lim_{\epsilon \downarrow 0}\overline{\VaR}_{p-\epsilon}^{R}(\mathcal F^o_2(F,G))= \overline{\VaR}_{p}^{R}(\mathcal F^o_2(F,G)).
\end{align}

Fix $p\in (0,1)$. If $F^{-1}(p)=G^{-1}(p)$, by Proposition \ref{cor:var}, $\overline{\VaR}_{p}^{R}(\mathcal F^o_2(F,G))=F^{-1}(p)+G^{-1}(p)$. For $\epsilon>0$, by Theorem \ref{thm2} and Corollary \ref{rm:cv},
$$\overline{\VaR}_{p-\epsilon}^{R}(\mathcal F^o_2(F,G))=\overline{\essinf}\left(\mathcal F^o_2 (F^{[p-\epsilon,1]},G^{[p-\epsilon,1]} )\right)\ge F^{-1}(p-\epsilon)+G^{-1}(p-\epsilon).$$
Thus we have
$$F^{-1}(p-\epsilon)+G^{-1}(p-\epsilon)\le\overline{\VaR}_{p-\epsilon}^{R}(\mathcal F^o_2(F,G))\le \overline{\VaR}_{p}^{R}(\mathcal F^o_2(F,G))= F^{-1}(p)+G^{-1}(p).$$
As $F^{-1}$ and $G^{-1}$ are continuous, let $\epsilon$ go to $0$, we get the desired result.

For the rest  of the proof, we assume  $F^{-1}(p)<G^{-1}(p)$. We first deal with the case where $F^{-1}(1)< \infty$ and $G^{-1}(1)<\infty$.
For $\epsilon>0$, let
$$\delta(\epsilon)=\sup_{p\le t\le 1}\left\{\left[F^{-1}(t)-F^{-1}(t-\epsilon)\right]\vee\left[G^{-1}(t)-G^{-1}(t-\epsilon)\right]\right\}.$$
As $F^{-1}$, $G^{-1}$ are continuous and $F^{-1}(1)< \infty$ and $G^{-1}(1)<\infty$, we have $0<\delta(\epsilon)<\infty$ and $\delta(\epsilon)\downarrow 0$ as $\epsilon \downarrow 0$.
Furthermore, let $$h(\epsilon)=\sup\left\{(F(G^{-1}(p))-p)-(F(z)-G(z)): z\in [G^{-1}(p-\epsilon),G^{-1}(p)]\right\}.$$
Because $F-G$ is continuous, we have $0\le h(\epsilon)<\infty$ and $h(\epsilon) \downarrow 0$ as $\epsilon \downarrow 0$.
 As $F^{-1}(p)<G^{-1}(p)$, we can take $\epsilon$ small enough such that $F(G^{-1}(p))-p>h(\epsilon)$ and $G^{-1}(p)-F^{-1}(p)>\delta(\epsilon)$.
By the definition of $\delta(\epsilon)$, we also have $$F^{-1}(p-\epsilon)<F^{-1}(p)<G^{-1}(p)-\delta(\epsilon)\le G^{-1}(p-\epsilon)<G^{-1}(p).$$
Define
$$x_\epsilon=\inf\left\{x: F(x)-(p-\epsilon)\ge F(G^{-1}(p))-p-h(\epsilon)\right\}.$$
 As $F$ is strictly increasing and continuous, we have $F^{-1}(p-\epsilon)<x_\epsilon< G^{-1}(p)$. Furthermore, $x_\epsilon \uparrow G^{-1}(p)$ as $\epsilon \downarrow 0$.
 Let $d(\epsilon)=G^{-1}(p)-x_{\epsilon}$. Thus, $0<d(\epsilon)<G^{-1}(p)-F^{-1}(p-\epsilon)$ and $d(\epsilon) \downarrow 0$ as $\epsilon \downarrow 0$. Furthermore, for any $x< x_\epsilon$, we have $F(x)-(p-\epsilon)<F(G^{-1}(p))-p-h(\epsilon)$.

From Proposition \ref{cor:var}, we have
$$\overline{\VaR}_{p-\epsilon}^R(\mathcal F^o_2(F,G))=\min\left\{\inf_{x\in\left[F^{-1}(p-\epsilon),G^{-1}(p-\epsilon)\right]}\left\{T^{F^{[p-\epsilon,1]},G^{[p-\epsilon,1]}}(x)+x\right\},2G^{-1}(p-\epsilon)\right\}.$$

\begin{enumerate}[(i)]

\item
For any $x \in \left[G^{-1}(p)-\delta(\epsilon)\vee d(\epsilon),G^{-1}(p-\epsilon)\right]$, we have
\begin{align}
T^{F^{[p-\epsilon,1]},G^{[p-\epsilon,1]}}(x)+x\ge 2x&\ge 2G^{-1}(p)-2\delta(\epsilon)\vee d(\epsilon) \notag
\\&\ge \overline{\VaR}_{p}^R(\mathcal F^o_2(F,G))-2\delta(\epsilon)\vee d(\epsilon).\label{eq:varcon1}
\end{align}
\item
 For any $x \in \left[F^{-1}(p-\epsilon),G^{-1}(p)-\delta(\epsilon)\vee d(\epsilon)\right)$, let $y=F^{-1}(F(x)+\epsilon)$. As $F(x)+\epsilon\ge p$, we have
\begin{equation}\label{eq:varcon2}
y-x=F^{-1}(F(x)+\epsilon)-F^{-1}(F(x))\le \delta(\epsilon),
\end{equation}
 and $y\le x+\delta(\epsilon)\le G^{-1}(p)$. Moreover,  we have $y\ge %F^{-1}(F(F^{-1}(p-\epsilon))+\epsilon)=
 F^{-1}(p)$. Therefore, $y\in \left[F^{-1}(p),G^{-1}(p)\right]$. By the definition of $h(\epsilon)$ and $x<x_\epsilon$, we have  for all $z \in [G^{-1}(p-\epsilon),G^{-1}(p)]$, $$F(z)-G(z)>F(G^{-1}(p))-p-h(\epsilon)>F(x)-(p-\epsilon).$$ Thus,
\begin{align*}
T^{F^{[p-\epsilon,1]},G^{[p-\epsilon,1]}}(x)&=\inf\{z> G^{-1}(p-\epsilon): F(z)-G(z)<F(x)-(p-\epsilon)\}\\
&=\inf\{ z> G^{-1}(p): F(z)-G(z)<F(y)-p\}\\
&=T^{F^{[p,1]},G^{[p,1]}}(y).
\end{align*}
 By \eqref{eq:varcon2}, we have
 \begin{equation}\label{eq:varcon4}
T^{F^{[p-\epsilon,1]},G^{[p-\epsilon,1]}}(x)+x=T^{F^{[p,1]},G^{[p,1]}}(y)+y+(x-y)\ge T^{F^{[p,1]},G^{[p,1]}}(y)+y-\delta(\epsilon).
\end{equation}

\end{enumerate}
Combining \eqref{eq:varcon1}, \eqref{eq:varcon4} and the fact that $G^{-1}(p-\epsilon)\ge G^{-1}(p)-\delta(\epsilon)\vee d(\epsilon)$, we conclude that, for $p \in (0,1)$,
 $$\overline{\VaR}_{p}^{R}(\mathcal F^o_2(F,G))-2\delta(\epsilon)\vee d(\epsilon)\le \overline{\VaR}_{p-\epsilon}^{R}(\mathcal F^o_2(F,G))\le \overline{\VaR}_{p}^{R}(\mathcal F^o_2(F,G)).$$
Letting $\epsilon \downarrow 0$, we get  \eqref{eq:rw-april2} for the case $F^{-1}(1)<\infty$ and $G^{-1}<\infty$.

If $F^{-1}(1)=\infty$ or $G^{-1}(1)=\infty$, following the proof of Proposition 4 in \cite{BLLW20}, we have $\overline{\VaR}_p^R(\mathcal F^o_2(F,G))=\overline{\VaR}_p^R\left(\mathcal F^o_2\left(F^{[0,m]},G^{[0,m]}\right)\right)$ for $p \in [0, 2m-1)$ and $ 1/2<m < 1$. Intuitively, extremely large values of risks do not contribute to the calculation of the worst-case value of VaR$^{R}$. As $\left(F^{[0,m]}\right)^{-1}(1)<\infty$ and $\left(G^{[0,m]}\right)^{-1}(1)<\infty$, $\overline{\VaR}_p^R\left(\mathcal F^o_2(F,G)\right)$ is continuous of $p \in (0,2m-1)$.
Letting $m \to 1$, we get  \eqref{eq:rw-april2}.\qedhere

\end{proof}

\small

\end{document}